\documentclass[a4paper,11pt]{article}
\usepackage{fullpage}

\pdfoutput=1
\usepackage[colorlinks=true,urlcolor=Blue,citecolor=Green,linkcolor=BrickRed]{hyperref}
\usepackage[usenames,dvipsnames]{xcolor}
\usepackage[utf8]{inputenc}
\usepackage{amssymb,amsmath,amsfonts,amsthm}
\usepackage{algorithm}
\usepackage{algorithmicx}
\usepackage[noend]{algpseudocode}
\usepackage[noadjust]{cite}
\usepackage[OT4]{fontenc}
\usepackage{rotating}
\usepackage{xspace}
\usepackage{verbatim}
\usepackage{mathrsfs}
\usepackage{framed}
\usepackage{thm-restate}
\usepackage[capitalise]{cleveref}
\usepackage{colortbl}
\usepackage{array,multirow}
\usepackage{authblk}
\usepackage{tikz}
\usepackage[linguistics]{forest}
\usepackage{graphicx}
\usepackage{subcaption}
\usepackage{mathtools}
\usepackage{arydshln}
\usepackage{ circuitikz }

\DeclareMathOperator{\slp}{\G}
\DeclareMathOperator{\pro}{\rho}


\newcommand{\start}{\mathcal{S}}
\newcommand{\nonterm}{\mathcal{V}}

\def\dd{\mathinner{.\,.}}
\newcommand{\Oh}{\mathcal{O}}

\newcommand{\set}[1]{\left\lbrace #1 \right\rbrace}

\DeclareMathOperator{\polylog}{polylog}

\renewcommand{\exp}{\mathrm{Exp}}
\newcommand{\h}{\mathrm{h}}
\newcommand{\w}{\mathrm{w}}

\DeclareMathOperator{\depth}{depth}
\newcommand{\G}{\mathcal{G}}
\renewcommand{\H}{\mathcal{H}}
\DeclareMathOperator{\T}{\mathcal{T}}

\DeclareMathOperator{\Scal}{\mathcal{S}}

\newcommand{\ovrt}{\ominus}
\newcommand{\ohrz}{\mathbin{\rotatebox[origin=c]{90}{$\ovrt$}}}

\DeclareMathOperator{\type}{\text{type}}
\DeclareMathOperator{\sort}{\text{sort}}
\DeclareMathOperator{\rank}{\text{rank}}
\DeclareMathOperator{\node}{\text{node}}

\newcommand{\rk}{r}
\newcommand{\N}{\mathbb{N}}
\newcommand{\Bin}{\mathsf{Bin}}
\newcommand{\ShiftBin}{\mathsf{ShiftBin}}

\newcommand{\para}[1]{\paragraph{#1}}


\newtheorem{theorem}{Theorem}[section]
\newtheorem{@theorem}[theorem]{Theorem}

\newtheorem{lemma}[theorem]{Lemma}
\newtheorem{example}[theorem]{Example}

\newtheorem{proposition}[theorem]{Proposition}
\newtheorem{definition}[theorem]{Definition}
\newtheorem{claim}[theorem]{Claim}


\crefname{conjecture}{Conjecture}{Conjectures}
\crefname{lemma}{Lemma}{Lemmas}
\crefname{problem}{Problem}{Problems}
\crefname{remark}{Remark}{Remarks}
\crefname{definition}{Definition}{Definitions}
\crefname{observation}{Observation}{Observations}
\crefname{@theorem}{Theorem}{Theorems}
\crefname{fact}{Fact}{Facts}
\crefname{claim}{Claim}{Claims} 

\title{Balancing Two-Dimensional Straight-Line Programs\footnote{Partially supported by the Polish National Science Centre grant number 2023/51/B/ST6/01505.}}

\author[1]{Itai Boneh}
\author[1]{Estéban Gabory}
\author[1]{Paweł Gawrychowski}
\author[1]{Adam Górkiewicz}
\affil[1]{Institute of Computer Science, University of Wrocław, Poland}

\date{}

\begin{document}

\maketitle

\begin{abstract}
We consider building, given a straight-line program (SLP) consisting of $g$ productions deriving a two-dimensional string
$T$ of size $N\times N$, a structure capable of providing random access to any character of $T$. For one-dimensional strings,
it is now known how to build a structure of size $\Oh(g)$ that provides random access in $\Oh(\log N)$ time. In fact,
it is known that this can be obtained by building an equivalent SLP of size $\Oh(g)$ and depth $\Oh(\log N)$
[Ganardi, Jeż, Lohrey, JACM 2021]. We consider the analogous question for two-dimensional strings: can we build an
equivalent SLP of roughly the same size and small depth?

We show that the answer is negative: there exists an infinite family of two-dimensional strings
of size $N\times N$ described by a 2D SLP of size $g$ such that any 2D SLP describing the same string of depth
$\Oh(\log N)$ must be of size $\Omega(g\cdot N/\log^{3}N)$. We complement this with an upper bound
showing how to construct such a 2D SLP of size $\Oh(g\cdot N)$. Next, we observe that one can naturally define
a generalization of 2D SLP, which we call 2D SLP with holes. We show that a known general balancing theorem by [Ganardi, Jeż, Lohrey, JACM 2021]
immediately implies that, given a 2D SLP of size $g$ deriving a string of size $N\times N$, we can construct
a 2D SLP with holes of depth $\Oh(\log N)$ and size $\Oh(g)$. This allows us to conclude that there is
a structure of size $\Oh(g)$ providing random access in $\Oh(\log N)$ time for such a 2D SLP. Further, this can be
extended (analogously as for a 1D SLP) to obtain a structure of size $\Oh(g \log^{\epsilon}N)$ providing random
access in $\Oh(\log N/\log \log N)$ time, for any $\epsilon >0$.
The same (optimal) random access time was very recently achieved by [De and Kempa, to appear in SODA 2026],
but with a significantly larger structure of size $\Oh(g \log^{2+\epsilon}N)$.
\end{abstract}

\section{Introduction}
The goal of the broad area of processing compressed strings is to design algorithms and data structures that operate
directly on the compressed representation of a string, with the complexity depending only (or mostly) on its size.
This can be of course considered for any problem and any compression method. Perhaps the most basic problem
is providing random access, that is, accessing the $i$-th character of the compressed string efficiently.
This is a fundamental prerequisite for more complex questions such as indexing.

Among the plethora of different compression methods, straight-line programs (SLPs) are particularly elegant yet powerful, and algorithms for processing SLP-compressed strings constitute an area of research on their own~\cite{DBLP:journals/gcc/Lohrey12}. Informally speaking, a SLP is simply a
context-free grammar on $g$ productions deriving a single string $T$ of length $N$.
A natural approach to providing random access to a SLP is to ensure that its depth is small, and then just
implement random access by traversing the grammar in time bounded by its depth.
While it was known how to guarantee that the depth is $\Oh(\log N)$ at the expense of increasing the size of the grammar
by a logarithmic factor~\cite{DBLP:journals/tcs/Rytter03,DBLP:journals/tit/CharikarLLPPSS05}, it was not clear if such
an increase in the size is necessary.
However, Bille et al.~\cite{DBLP:journals/siamcomp/BilleLRSSW15} settled the complexity of providing random access to a SLP
by describing a structure of size $\Oh(g)$ with query time $\Oh(\log N)$ (in the Word RAM model; see~\cite{DBLP:journals/algorithmica/BilleGGLW21}
for an alternative solution in the weaker pointer machine model).
This should be compared with a lower bound of Verbin and Yu~\cite{DBLP:conf/cpm/VerbinY13}: there exists an
infinite family of SLPs on $g$ productions describing strings of length $N=g^{1+\epsilon}$ such that, for any structure
of size $\Oh(g\polylog N)$, the query time needs to be $\Omega(\log N/\log\log N)$. Thus, the only remaining
question is whether we can design a structure of size $\Oh(g)$ and query time $\Oh(\log N/\log\log N)$.

The structure of Bille et al. requires carefully combining quite a few tools. Somewhat surprisingly, Ganardi, Jeż, and
Lohrey~\cite{DBLP:journals/jacm/GanardiJL21} showed that this is not necessary: one can always build an equivalent
SLP (describing the same string) of size $\Oh(g)$ and depth $\Oh(\log N)$. With such a grammar in hand,
random access can be directly implemented in $\Oh(\log N)$ by simply storing the length of the expansion of
each nonterminal and traversing the grammar down from the starting nonterminal. Further, one can easily obtain
a structure of size $\Oh(g\log^{\epsilon}N)$ and optimal query time $\Oh(\log N/\log\log N)$ by ``unwinding'' the first
$\Theta(\log\log N)$ levels of the derivation of each nonterminal, and storing the information about the obtained
sequence of nonterminals in a fusion tree~\cite{DBLP:conf/stoc/FredmanW90}.

In this paper, we consider a two-dimensional string, which is simply a rectangular array of characters 
of size $N\times M$. Such objects
have been studied in the area of formal languages~\cite{DBLP:reference/hfl/GiammarresiR97}.
Lempel and Ziv~\cite{DBLP:journals/tit/LempelZ86} defined a compression scheme for 2D strings by proceeding similarly as for 1D strings, and
informally speaking keeping only the first occurrence of each substring while replacing its further occurrences
by pointers to the first occurrence. However, this is not necessarily convenient for providing efficient random access,
which led Brisaboa et al.~\cite{DBLP:journals/cj/BrisaboaGGN24} to introduce 2D block trees that do allow for efficient
random access at the expense of restricting which occurrences are being replaced (and no clear bound on the size
of the compressed representation; see~\cite{DBLP:journals/access/CarfagnaM24} for an attempt at providing one).
Another natural possibility is to generalize context-free grammars to two-dimensional strings, by allowing productions
of the form $A \ohrz B$ and $A\ovrt B$, denoting the horizontal and the vertical concatenation,
respectively. For every such production, the corresponding dimensions of all strings derived from both nonterminals must be equal.
Such a generalization appears to have been introduced multiple times in the
literature~\cite{DBLP:conf/stacs/Matz97,DBLP:journals/iandc/SiromoneySK73,DBLP:conf/iwpt/Tomita89a}.
This leads to the notion of a 2D SLP, which is such a context-free grammar that derives exactly one 2D string.

Berman et al.~\cite{DBLP:journals/jcss/BermanKLPR02} showed that many problems that can be efficiently
solved on 1D SLPs become intractable on 2D SLPs. For example, compressed pattern matching
(where the pattern is uncompressed but the text is compressed) for 2D strings is NP-complete, but linear-time
for 1D SLP-compressed texts~\cite{DBLP:conf/soda/GanardiG22} or 2D RLE-compressed texts~\cite{DBLP:journals/jal/AmirBF97}.
However, for random access this is not the case: very recently, De and Kempa~\cite{de2025optimalrandomaccessconditional}
showed how to construct a structure of size $\Oh(g\log^{2+\epsilon}N)$ with query time $\Oh(\log N/\log\log N)$ (for $N\geq M$).
Carfagna et al.~\cite{carfagna2025generalizationrepetitivenessmeasurestwodimensional} observed
that, in fact, the structure of Bille et al. can be (informally speaking) applied on each dimension separately,
resulting in a structure of size $\Oh(g)$ and query time $\Oh(\log N)$.

\para{Balancing 2D SLPs.}
We study the problem of balancing 2D SLPs. Our first result is a simple upper bound, showing that given a 2D SLP
of size $g$ deriving a string of size $N\times M$ (for $N \geq M$) we can construct a 2D SLP of size $\Oh(g\cdot M)$ and depth $\Oh(\log N)$.
This is, of course, a significant increase in the size for highly compressible strings.
A natural question is whether we could, similarly to 1D SLPs, achieve the same balancing without increasing the SLP size. Namely,
given a 2D SLP on $g$ productions deriving a string of size $N\times M$ (for $N \geq M$), is there a 2D SLP on $\Oh(g)$ productions
deriving the same string but of depth $\Oh(\log N)$? We give a negative answer to this question, showing that
for infinitely many values of $N$ there exists an $N\times N$ 2D string described by a 2D SLP of size $g$, such
that any 2D SLP of depth $\Oh(\log N)$ describing the same string must consist of $\Omega(g \cdot N/\log^{3}N)$ productions.

\begin{figure}[b]
\includegraphics[width=\textwidth]{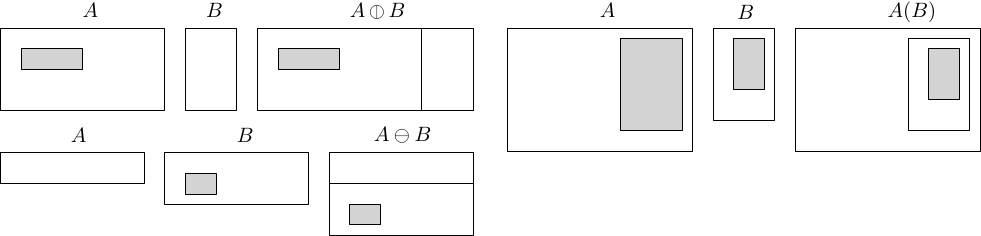}
\caption{2D SLP with holes allows horizontal concatenation $A\ohrz B$, vertical concatenation $A \ovrt B$, and substitution $A(B)$.
The gray rectangle represents the hole that might be present in at most one argument of a concatenation, and might be present in $B$
(and needs to be present in $A$) for a substitution $A(B)$. Further, the corresponding dimensions always need to match.}
\label{fig:operations}
\end{figure}

\para{2D SLPs with holes.}
To overcome this barrier, we define a natural generalization
of 2D SLPs: 2D SLPs with holes. In such a SLP, a nonterminal can also derive a 2D string from which
a rectangular fragment has been removed. One can then naturally define the productions as either horizontal
and vertical concatenations of two nonterminals in which at most one has a hole, and substituting the string (with or without a hole)
derived by a nonterminal into the hole of the string derived by another nonterminal, see~\cref{fig:operations}.
Then, given a 2D SLP with holes of depth $d$, we can implement random access in $\Oh(d)$ time
by simply storing, for every nonterminal, the size of its derived string together with the position of its hole
(if one exists). Now, given a 2D SLP of size $g$ deriving a 2D string of size $N\times M$ (for $N\ge M$), we would like to
construct a balanced 2D SLP with holes deriving exactly the same string. We observe that the general balancing theorem of
Ganardi, Jeż, and Lohrey~\cite{DBLP:journals/jacm/GanardiJL21} immediately implies that one can obtain
such a 2D SLP with holes of size $\Oh(g)$ and depth
$\Oh(\log N)$. Next, by ``unwinding'' the first $\Theta(\log\log N)$ levels of the derivation
of each nonterminal, this allows us to obtain a structure of size $\Oh(g\cdot \log^{\epsilon}N)$ and query time
$\Oh(\log N/\log\log N)$, for any $\epsilon>0$, in the Word RAM model.
Such query time is optimal for structures of size $\Oh(g \polylog N)$ by applying the lower bound for 1D SLP,
and our space is smaller by two factors of $\log N$ compared to the recent result of De and Kempa~\cite{de2025optimalrandomaccessconditional},
matching the best known bounds for 1D SLPs~\cite{DBLP:conf/esa/BelazzouguiCPT15}.

\section{Preliminaries}

\para{Integer intervals.}
For integers $i,j$, we denote $[i \dd j] = \{i,i+1, \ldots ,j \}$ ($[i \dd j]=\emptyset$ for $j<i$), and $[i]=[1 \dd i]$. 
We also denote $(i-1 \dd j+1)=[i \dd j+1)=(i-1 \dd j]=[i \dd j]$.

\para{Strings.}
Let $\Sigma$ be a fixed finite alphabet.
A string $S = S[1]S[2]\ldots S[N]$ of length $N\in \N$ is a sequence of $N$ characters from $\Sigma$.
For $i,j \in [N]$, we define a substring $X$ of $S$ as $X= S[i \dd j]=S[i]S[i+1]\ldots S[j]$.
If $i=1$ or $j=N$, the substring $X$ is called a prefix or a suffix of $S$, respectively. Given a string $S$ of length $N$ and a string $T$ of length $M$, we define their concatenation $S\cdot T=S[1]S[2]\ldots S[N]T[1]T[2]\ldots T[M]$.

\para{2D strings.}
An $N \times M$ two-dimensional (2D) string $S$ is an $N \times M$ array of characters from $\Sigma$.
We write $\h(S)$ to denote the number of rows $N$, and $\w(S)$ to denote the number of columns $M$.
We also write $\dim(S)$ to denote the pair $(N, M)$.
Similarly to 1D strings, we define a substring of $S$ as $X=S[i \dd j][i' \dd j']$ for some $i,j \in [N]$ and $i',j'\in [M]$.
For every $i\in [N]$, we define $S[i] = S[i][1 \dd M]$ as the $i$-th row of $S$.
Given two 2D strings $A$, $B$, we define their \emph{horizontal concatenation} $A \ohrz B$ when $\h(A)=\h(B)$ and their \emph{vertical concatenation} $A \ovrt B$ when $\w(A) = \w(B)$ in the natural way.
We sometimes view one-dimensional (1D) strings of length $N$ as $1 \times N$ strings, and elements of $\Sigma$ as $1 \times 1$ strings.
For a more detailed introduction on 2D strings and languages, see~\cite{DBLP:reference/hfl/GiammarresiR97}.

\para{2D SLP.}
A \emph{2D straight-line program} $\slp$ is a grammar $(\nonterm,\start,\pro)$ where: 

\begin{itemize}
    \item $\nonterm$ is a set of \emph{nonterminals}, with $\nonterm\cap \Sigma=\emptyset$. Each $X \in \nonterm$ has a \emph{dimension}
    $\dim(X)=(\h(X),\w(X))$ indicating that it derives a string of size $\h(X)\times \w(X)$.
    \item $\start\in \nonterm$ is called the \emph{starting nonterminal}.
	\item $\pro$ is a mapping on the set $\nonterm$, where $\pro(X)$ is one of the following:
    \begin{itemize}
		\item $\sigma$, where $\sigma \in \Sigma$, and $\h(X) = \w(X) = 1$,
        \item $Y \ohrz Z$, where $Y, Z \in \nonterm$, $\h(X)=\h(Y)=\h(Z)$, $\w(X)=\w(Y)+\w(Z)$,
        \item $Y \ovrt Z$, where $Y, Z \in \nonterm$, $\w(X)=\w(Y)=\w(Z)$, $\h(X)=\h(Y)+\h(Z)$.  
    \end{itemize}
    We call the pair $(X,\pro(X))$ a \emph{production}, with $X$ being its
    \emph{left-hand side} and $\pro(X)$ its \emph{right-hand side}.
    We require that the relation on $\nonterm$ defined by $X\le Y$ if $Y$ appears in the right-hand side of $X$ must be acyclic.
    We will sometimes allow a production $\pro(X)$ to be any string over $\Sigma \cup \nonterm$, and then normalize the
    grammar in the standard way.
\end{itemize}

From the acyclicity condition, it follows that for each nonterminal $X \in \nonterm$, there is a unique 2D string $\exp_{\slp}(X)$, called \emph{expansion} of $X$, derived from $X$ by recursively replacing each nonterminal in the right-hand side of its production by the corresponding derived string.
This process can be described as a tree $\T_\G$, called the derivation tree of $\G$.
Every internal node in $\T_\G$ is labeled with a nonterminal of $\G$, while each leaf is labeled with a symbol from $\Sigma$.
The root corresponds to the starting nonterminal $\start$.
The children of a node labeled with a nonterminal $X \in \nonterm$ correspond to the nonterminals that appear in $\rho(X)$, or a letter $\sigma \in \Sigma$ if $\rho(X)= \sigma$.

The 2D string derived by the SLP is denoted by $\exp(\slp) = \exp_{\slp}(\start)$.
The \emph{size} $|\slp|$ is the total number of symbols on the right-hand sides of all productions, and the \emph{depth} of $\slp$ is the depth of $\T_{\slp}$.
Notice that there is a 1-1 correspondence between the leaves of $\T_{\slp}$ and the positions in $\exp(\slp)$.

We view one-dimensional SLPs as a special case of 2D SLPs that use only horizontal concatenations and derive strings of height~$1$.
When convenient, we may interpret such a string either as a one-dimensional or $1 \times n$ two-dimensional string.

\para{Fusion trees.}
For efficient random access on 2D strings, we need need predecessor queries and fusion trees.
Let $S\subseteq [U]$ be a set of integers. A \emph{predecessor query} on $S$ takes as input an integer $x\in [U]$ and returns the largest integer $y\in S$ such that $y\le x$. We have the following result in the Word RAM model.

\begin{lemma}[\cite{DBLP:conf/stoc/FredmanW90}]\label{lem:fusion}
Given a set $S$ of $k$ integers from $[U]$, one can build in time $\Oh(k)$ a data structure of size $\Oh(k)$, called a \emph{fusion tree}, that allows predecessor queries in $\Oh(\log k / \log\log U)$ time.
\end{lemma}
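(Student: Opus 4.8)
This is the classical fusion tree of Fredman and Willard, so the plan is to recall their construction and verify it yields the stated parameters. Let $w$ be the Word RAM word size; since elements of $S$ fit in words we have $w\ge\log U$ and hence $\log w\ge\log\log U$, so it suffices to obtain query time $\Oh(\log k/\log w)$. Fix a branching parameter $B=\Theta(w^{1/5})$ and build, over the sorted elements of $S$, a balanced search tree of fan-out $B$: each internal node stores up to $B-1$ separator keys and the corresponding children. Its height is $\Oh(\log_B k)=\Oh(\log k/\log w)$, so a predecessor query reduces to routing $x$ down the tree, spending $\Oh(1)$ time per node to decide, from the node's $\le B-1$ keys, which child interval contains $x$.

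The constant-time routing subroutine uses the Fredman--Willard \emph{sketch}. Consider the compressed binary trie on the node's $\le B-1$ keys (viewed as $w$-bit strings); it has at most $r\le B-2$ branching bit positions $b_1<\dots<b_r$. Let $\mathrm{sk}(y)$ be the $r$-bit number obtained by keeping only bits $b_1,\dots,b_r$ of $y$. On the keys of this node $\mathrm{sk}(\cdot)$ is order-preserving, so their sketches form a sorted list of $\le B-1$ numbers of $\Oh(B)$ bits each, which I would pack (with a spacer bit per field, fitting in $\Oh(1)$ words since $B^{\Oh(1)}=o(w)$) into a single word $W$ at construction time. Fredman--Willard's sketching lemma gives a mask $\mu$ and a multiplier $m$ of $\Oh(B^4)=o(w)$ bits such that $\mathrm{sk}(y)$ can be read off from $(y\wedge\mu)\cdot m$ by one further mask and shift; hence $\mathrm{sk}(x)$ costs $\Oh(1)$ word operations given $\mu,m$ stored at the node.

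Given $x$, I compute $\mathrm{sk}(x)$, broadcast it into all fields of a word by one multiplication, subtract from $W$, and inspect the spacer bits to recover in $\Oh(1)$ the number $j$ of node-sketches that are $\le\mathrm{sk}(x)$ --- a standard word-parallel comparison. Since sketching preserves order only among the node's keys, $j$ can be off; but $x$'s true position is determined by the longest common prefix of $x$ with its sketch-neighbours $y_j,y_{j+1}$: using the $\Oh(1)$ most-significant-bit operation to find the first differing position $p$, forming $x'$ by replacing the tail of $x$ after $p$ with $10\cdots0$ or $01\cdots1$, and repeating the word-parallel comparison on $\mathrm{sk}(x')$ pins down the correct child in $\Oh(1)$ more time. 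Summing over the $\Oh(\log k/\log w)=\Oh(\log k/\log\log U)$ levels gives the query bound, maintaining the best candidate along the descent as the answer.

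For construction: with $S$ sorted, the $B$-ary tree shape is built bottom-up in $\Oh(k)$ time; per node one finds $b_1,\dots,b_r$ by scanning its $\le B-1$ keys with msb operations and then computes $m$ by the greedy procedure of \cite{DBLP:conf/stoc/FredmanW90}. I expect the point needing the most care to be keeping total preprocessing at $\Oh(k)$: the per-node sketch-multiplier computation takes $\poly(B)$ word operations, and one must choose the exponent in $B=\Theta(w^{1/5})$ (and, if needed, a shared precomputed table indexed by the pattern of distinguishing offsets) so that this is $\Oh(1)$ amortised per key --- exactly as handled in the original paper, to which I would defer for the remaining details.
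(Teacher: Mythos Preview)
The paper does not prove this lemma at all: it is stated as a black-box citation of Fredman and Willard~\cite{DBLP:conf/stoc/FredmanW90}, with no accompanying argument. There is therefore nothing in the paper to compare your proposal against.

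Your sketch is a faithful outline of the classical fusion-tree construction and would be appropriate if an actual proof were required. One point to watch: the lemma as stated claims $\Oh(k)$ construction time from an arbitrary set $S$, whereas your argument (correctly) assumes $S$ is already sorted; the original Fredman--Willard paper does not give linear-time construction from unsorted input, and strictly speaking neither does your sketch. In the context of this paper the distinction is immaterial, since the fusion trees in \cref{thm:random} are built over coordinate lists that are trivially obtained in sorted order.
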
 

\section{Upper Bound for Balancing Standard Two-Dimensional SLPs}

The goal of this section is to prove the following theorem. For notational convenience, we assume $N\le M$, by
transposing the input we obtain the statement from the introduction.

\begin{restatable}{theorem}{restateNoHoleBalancing}\label{theorem:noholebalancing}
	Given a 2D SLP $\G$ deriving a string $T \in \Sigma^{N \times M}$ (for $N \le M$), one can construct an equivalent 2D SLP of size $\Oh(|\G| \cdot N)$ and depth $\Oh(\log M)$.
\end{restatable}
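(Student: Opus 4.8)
The plan is to reduce the problem to the one‑dimensional balancing theorem of Ganardi, Jeż, and Lohrey~\cite{DBLP:journals/jacm/GanardiJL21}. The factor $N$ in the size bound is exactly what we pay for slicing $T$ into its $N$ rows: first turn $\G$ into a single (shared) 1D SLP that simultaneously derives all $N$ rows of $T$, then balance it as an ordinary 1D SLP, and finally stitch the balanced rows back together with a shallow tree of vertical concatenations.

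First I would delete all nonterminals unreachable from $\start$; afterwards $\exp_{\G}(X)$ is a subrectangle of $T$ for every $X\in\nonterm$, so $\h(X)\le N$ and $\w(X)\le M$. I will also use that a 2D SLP of size $g$ derives a string of area at most $2^{\Oh(g)}$, hence $NM\le 2^{\Oh(|\G|)}$ and $|\G|=\Omega(\log(NM))=\Omega(\log M)$; this lets me absorb lower‑order terms later. Now, for every $X\in\nonterm$ and every $i\in[\h(X)]$ introduce a 1D nonterminal $X_i$ meant to derive the $i$‑th row of $\exp_{\G}(X)$. Productions translate locally: if $\pro(X)=\sigma$ then $\pro(X_1)=\sigma$; if $\pro(X)=Y\ohrz Z$ then $\pro(X_i)=Y_i\ohrz Z_i$ for each $i$ (legal since $\h(Y)=\h(Z)=\h(X)$); and if $\pro(X)=Y\ovrt Z$ then $X_i$ is merely an alias for $Y_i$ when $i\le\h(Y)$ and for $Z_{i-\h(Y)}$ otherwise, which is resolved by substitution without increasing the size. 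Call the resulting 1D SLP $\H$; it has $\sum_X \h(X)=\Oh(|\G|\cdot N)$ nonterminals, $|\H|=\Oh(|\G|\cdot N)$, and $\start_i$ derives the row $T[i]$ for each $i\in[N]$.

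Next I would merge the $N$ outputs of $\H$ into one: add a fresh nonterminal $S^\ast$ with production $\start_1\start_2\cdots\start_N$ and normalize it into a balanced binary tree of horizontal concatenations. This is a 1D SLP of size $\Oh(|\G|\cdot N)$ deriving the length‑$NM$ string $T'=T[1]T[2]\cdots T[N]$, so by the balancing theorem of~\cite{DBLP:journals/jacm/GanardiJL21} there is an equivalent 1D SLP $\H''$ with $|\H''|=\Oh(|\G|\cdot N)$ and depth $\Oh(\log(NM))=\Oh(\log M)$. From a depth‑$d$ SLP, any substring equals the concatenation of $\Oh(d)$ of its nonterminals (the standard decomposition of the derivation tree along the two root‑to‑leaf paths bounding the substring), so for each $i$ I can write the $i$‑th row $T[i]$ — which occupies positions $[(i-1)M+1\dd iM]$ of $T'$ — as $\exp(A^i_1)\cdots\exp(A^i_{k_i})$ with $k_i=\Oh(\log M)$ and each $A^i_j$ a nonterminal of $\H''$ whose derivation tree has depth $\Oh(\log M)$. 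Let $R_i$ be a nonterminal with production $A^i_1\ohrz\cdots\ohrz A^i_{k_i}$, normalized into a balanced binary tree; then $R_i$ derives $T[i]$ viewed as a $1\times M$ string, contributes $\Oh(\log M)$ productions, and its derivation tree has depth $\Oh(\log\log M)+\Oh(\log M)=\Oh(\log M)$. Finally, since all $R_i$ have width $M$, I take the start symbol of the new grammar to be a balanced binary tree of vertical concatenations over $R_1,\dots,R_N$; this adds $\Oh(N)$ productions and depth $\Oh(\log N)$, and it derives $T$. The total size is $\Oh(|\G|\cdot N)+\Oh(N\log M)+\Oh(N)=\Oh(|\G|\cdot N)$ using $|\G|=\Omega(\log M)$, and the total depth is $\Oh(\log N+\log M)=\Oh(\log M)$.

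The construction is essentially routine; the only delicate point is the interface with the 1D balancing theorem, which balances a single start symbol and need not preserve the nonterminals $\start_i$. This is precisely why I route everything through the concatenated string $T'$ and recover the individual rows afterwards by substring decomposition, at a cost of $\Oh(\log M)$ extra productions per row — affordable exactly because $|\G|$ is already $\Omega(\log M)$. A secondary bookkeeping point is to keep every normalization (of $S^\ast$, of the $R_i$, and of the final vertical tree) balanced, so that no step contributes more than $\Oh(\log M)$ to the depth, and to check that all the concatenations introduced respect the dimension constraints of a 2D SLP.
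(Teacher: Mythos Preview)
Your proposal is correct and follows essentially the same three-step approach as the paper: linearize the 2D SLP into a 1D SLP of size $\Oh(|\G|\cdot N)$ deriving the row concatenation $T[1]\cdots T[N]$, balance it via~\cite{DBLP:journals/jacm/GanardiJL21}, then recover each row as a concatenation of $\Oh(\log M)$ nonterminals and stack them with a balanced vertical tree. You are even slightly more careful than the paper in two places: you explicitly prune unreachable nonterminals before asserting $\h(X)\le N$, and you articulate why one must go through the concatenated string $T'$ rather than balance the rows separately (the 1D balancing theorem need not preserve the individual $\start_i$).
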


We first introduce a simple tool for when we want to concatenate multiple nonterminals.
\begin{proposition}[Concatenation gadget]\label{lem:gadget}
	Let $X_1, X_2, \dots, X_k$ be nonterminals of a 2D SLP with expansions of the same height (horizontal case) or width (vertical case).
	Then one can introduce a new nonterminal $Y$ with
	\[
		\exp(Y) =
		\begin{cases}
			\exp(X_1) \ohrz \cdots \ohrz \exp(X_k), & \text{(horizontal case),}\\[2pt]
			\exp(X_1) \ovrt \cdots \ovrt \exp(X_k), & \text{(vertical case),}
		\end{cases}
		\]
		using $\Oh(k)$ additional nonterminals arranged as a balanced binary tree.
		This increases the total size of the grammar by $\Oh(k)$ and the overall depth by $\Oh(\log k)$.
		Importantly, when multiple such gadgets are added independently (that is, their roots do not appear in each other's subtrees), the total increase in depth remains bounded by the maximum of their individual increases, not their sum.
\end{proposition}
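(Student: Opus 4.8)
The plan is to put a balanced binary tree of fresh nonterminals on top of $X_1,\dots,X_k$. I would treat the horizontal case; the vertical one is identical after swapping $\ohrz$ for $\ovrt$ and $\h$ for $\w$ (equivalently, transpose). Let $h=\h(X_1)=\cdots=\h(X_k)$. If $k=1$, set $Y:=X_1$. Otherwise fix any binary tree $B$ with exactly $k$ leaves and depth $\lceil\log_2 k\rceil$, identify its leaves left-to-right with $X_1,\dots,X_k$, and for each internal node $v$ of $B$ introduce a new nonterminal $N_v$ of height $h$ whose right-hand side is the horizontal concatenation of its two children (a leaf child $X_j$ used directly, an internal child $c$ used as $N_c$); finally set $Y:=N_{\mathrm{root}(B)}$. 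Correctness follows by a bottom-up induction on $B$: if $v$ spans the leaf range $[i\dd j]$, then all of $\exp(X_i),\dots,\exp(X_j)$ have height $h$, so every concatenation encountered is well defined and $\exp(N_v)=\exp(X_i)\ohrz\cdots\ohrz\exp(X_j)$ with $\w(N_v)=\sum_{t=i}^{j}\w(X_t)$; the case $v=\mathrm{root}(B)$ is the claim.

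For size, $B$ has $k-1$ internal nodes, so we add $k-1$ nonterminals, each with a right-hand side of length $2$, increasing the grammar size by $2(k-1)=\Oh(k)$. For depth, the fresh nonterminals $\{N_v\}$ occupy a single contiguous block of depth $\lceil\log_2 k\rceil=\Oh(\log k)$ inside the derivation tree, whose only entry point from the rest of the tree is $Y$ and whose only exit points are the (pre-existing) subtrees rooted at $X_1,\dots,X_k$; no fresh nonterminal lies on a root-to-leaf path avoiding $Y$, so the depth of the subtree of any nonterminal increases by at most $\Oh(\log k)$ relative to the grammar in which $Y$ is collapsed to one symbol of depth $\max_j\depth(X_j)$. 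The final remark then follows: if gadgets $G_1,\dots,G_m$ of sizes $k_1,\dots,k_m$ are added so that no $G_i$'s root lies in another $G_j$'s subtree, then any root-to-leaf path of the resulting derivation tree enters the fresh block of at most one gadget (it enters at that gadget's root and leaves at one of its $X_j$-leaves, below which, by the independence hypothesis, no other gadget root occurs), so such a path is longer than the corresponding path of the original grammar by at most $\max_i\lceil\log_2 k_i\rceil$; hence the total depth increase is the maximum, not the sum, of the individual increases.

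The construction is routine, and I expect the only step needing genuine care to be the depth accounting — specifically the ``independent gadgets'' statement, which rests on the observation that a gadget's fresh nonterminals form one unbroken stretch of every derivation path through its root and occur nowhere else in the derivation tree.
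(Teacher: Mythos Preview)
Your proof is correct and is precisely the balanced-binary-tree construction the paper has in mind; the paper itself states the proposition without proof, so your write-up simply fills in the routine details, including a clean justification of the ``independent gadgets'' depth bound via the observation that any root-to-leaf path meets the fresh block of at most one gadget.
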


The proof of \Cref{theorem:noholebalancing} proceeds in three steps: 
(i) turn $\G$ into a 1D SLP of size $\Oh(|\G|\cdot N)$ deriving  the concatenation of the rows; 
(ii) balance this SLP using~\cite{DBLP:journals/jacm/GanardiJL21};
(iii) reconstruct the original 2D layout while preserving logarithmic depth.

The next lemma performs the first step of the construction.

\begin{lemma}\label{lem:row-linearization}
	Given a 2D SLP $\G$ deriving a string $T \in \Sigma^{N \times M}$ (for $N \le M$), one can construct a 1D SLP of size $\Oh(|\G| \cdot N)$ deriving the concatenation of all rows of $T$, that is,
	\[
	  T[1] \ohrz T[2] \ohrz \cdots \ohrz T[N] \in \Sigma^{1 \times NM},
	\]
	where $T[i]$ denotes the $i$-th row of $T$.
\end{lemma}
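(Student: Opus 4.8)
The plan is to build the 1D SLP by creating, for every nonterminal of $\G$, one nonterminal per row of its expansion. First I would assume that every nonterminal of $\G$ is reachable from $\start$ (discard the rest); then each $\exp_\G(X)$ is a contiguous sub-rectangle of $T=\exp_\G(\start)$, so $\h(X)\le N$ for all $X\in\nonterm$. For each $X\in\nonterm$ with $\h(X)=h$ I introduce fresh nonterminals $X^{(1)},\dots,X^{(h)}$, whose intended meaning is that $X^{(i)}$ derives $\exp_\G(X)[i]$, the $i$-th row of $\exp_\G(X)$ regarded as a $1\times\w(X)$ string.

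Next I would set the productions by case analysis on $\pro(X)$. If $\pro(X)=\sigma\in\Sigma$, then $\h(X)=\w(X)=1$ and we put $\pro(X^{(1)})=\sigma$. If $\pro(X)=Y\ohrz Z$, then $\h(X)=\h(Y)=\h(Z)$ and the $i$-th row of $\exp_\G(X)$ is the concatenation of the $i$-th rows of $\exp_\G(Y)$ and $\exp_\G(Z)$, so we put $\pro(X^{(i)})=Y^{(i)}Z^{(i)}$ for each $i$. If $\pro(X)=Y\ovrt Z$, then $\h(X)=\h(Y)+\h(Z)$, and the $i$-th row of $\exp_\G(X)$ equals the $i$-th row of $\exp_\G(Y)$ for $i\le\h(Y)$ and the $(i-\h(Y))$-th row of $\exp_\G(Z)$ for $i>\h(Y)$; accordingly we put $\pro(X^{(i)})=Y^{(i)}$ for $i\le\h(Y)$ and $\pro(X^{(i)})=Z^{(i-\h(Y))}$ for $i>\h(Y)$ --- a unit production, which is legitimate under the relaxed convention that right-hand sides may be arbitrary strings over $\Sigma\cup\nonterm$ and are normalized afterwards. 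A straightforward induction along the acyclic order of $\G$ then confirms $\exp(X^{(i)})=\exp_\G(X)[i]$ for all $X$ and $i$; since the dependency relation on the new nonterminals refines that of $\G$, the new grammar is acyclic.

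Finally, writing $\start$ for the starting nonterminal of $\G$ so that $T[i]=\exp(\start^{(i)})$ for $i\in[N]$, I would apply the concatenation gadget of \Cref{lem:gadget} (horizontal case) to $\start^{(1)},\dots,\start^{(N)}$, obtaining a nonterminal that derives $T[1]\ohrz T[2]\ohrz\cdots\ohrz T[N]\in\Sigma^{1\times NM}$; this becomes the starting nonterminal of the new SLP. For the size bound: $\G$ has $\Oh(|\G|)$ nonterminals, each contributing at most $N$ new nonterminals whose right-hand sides have length at most $2$, for $\Oh(|\G|\cdot N)$ symbols in total; the gadget adds a further $\Oh(N)$, and normalizing away the unit productions does not increase the size.

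As for difficulty, the argument is essentially bookkeeping and I do not expect a genuine obstacle. The one point that needs care is the vertical case: one must \emph{not} re-derive the rows of $\exp_\G(Y)$ and $\exp_\G(Z)$ from scratch (which would multiply the cost per derivation level by $N$ and destroy the bound), but instead merely alias the already-created row nonterminals through unit productions. The remaining checks --- that $\sum_{X\in\nonterm}\h(X)=\Oh(|\G|\cdot N)$, that acyclicity is preserved, and that the row-expansion identity holds --- are routine.
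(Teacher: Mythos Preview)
Your proposal is correct and essentially identical to the paper's proof: it introduces one nonterminal per row of each $\exp_\G(X)$, defines productions by exactly the same three-case analysis (terminal, horizontal, vertical with unit productions), and finishes with the concatenation gadget on $\start^{(1)},\dots,\start^{(N)}$. If anything, you are slightly more careful in explicitly noting that reachability from $\start$ is needed to guarantee $\h(X)\le N$, which the paper uses without comment.
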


\begin{proof}
	Let $\G=(\nonterm,\start,\rho)$ be the given grammar.
	For each $X \in \nonterm$ we introduce nonterminals 
	$X_1, X_2, \dots, X_{\h(X)}$.
	The goal is to construct a 1D SLP $\G'=(\nonterm',\start',\rho')$ over all the introduced nonterminals, so that for every $X \in \nonterm$ and $i \in [\h(X)]$ we have
	\[
	  \exp_{\G'}(X_i) = \exp_{\G}(X)[i].
	\]
	We first set $\nonterm' \coloneqq \set{X_i : X\in \nonterm,\, i \in [\h(X)]} \cup \set{\start'}$,
	where $\start'$ is a new starting nonterminal.
	Since each nonterminal $X$ introduces at most $\h(X) \le N$ new nonterminals, we have $|\nonterm'| = \Oh(|\nonterm| \cdot N)$.
	The productions $\rho'(X_i)$ are defined depending on the production $\rho(X)$:
	\begin{enumerate}
		\item if $\rho(X) = \sigma\in\Sigma$, then $\rho'(X_1) \coloneqq \sigma$,
		\item if $\rho(X) = A \ohrz B$, then $\rho'(X_i) \coloneqq A_i \ohrz B_i$,
		\item if $\rho(X) = A \ovrt B$ and $i \le \h(A)$, then $\rho'(X_i) \coloneqq A_i$, \label{case3}
		\item if $\rho(X) = A \ovrt B$ and $i > \h(A)$, then $\rho'(X_i) \coloneqq B_{i - \h(A)}$. \label{case4}
	\end{enumerate}

	To define $\rho'(\start')$ as the horizontal concatenation of $\start_1, \start_2, \dots, \start_N$, we use the concatenation gadget (\cref{lem:gadget}), adding $\Oh(N)$ to the overall size of $\G'$.
	It follows by induction that for every $X \in \nonterm$ and $i \in [\h(X)]$, $\exp_{\G'}(X_i) = \exp_{\G}(X)[i]$, and thus
	$\exp_{\G'}(\start') = T[1] \ohrz \cdots \ohrz T[N]$.
\end{proof}

We now proceed with the proof of \Cref{theorem:noholebalancing}.
Let $\G$ be a 2D SLP deriving a string $T$ of size $N\times M$ (for $N \le M$).
We apply \Cref{lem:row-linearization} and then balance the resulting 1D SLP.

\begin{theorem}[{\cite[Theorem 1.2]{DBLP:journals/jacm/GanardiJL21}}]\label{thm:GJL}
	Given a 1D SLP $\G$ deriving a string of length $N$, one can construct an equivalent 1D SLP of size $\Oh(|\G|)$ and depth $\Oh(\log N)$.
\end{theorem}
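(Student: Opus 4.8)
Since this is Theorem~1.2 of Ganardi, Jeż, and Lohrey, in our exposition it is simply invoked; for completeness, here is the route I would take to prove it. First I would normalise $\G$ to Chomsky normal form, so that every right-hand side is either two nonterminals $BC$ or a single letter $\sigma$, at the cost of only constant factors in $|\G|$ and in $N$. Equip each nonterminal $A$ with its weight $w(A)=|\exp_{\G}(A)|$, and for a production $A\to BC$ call the heavier of $B,C$ the \emph{heavy child} $\mathrm{hc}(A)$ (ties broken, say, toward $B$) and the other the \emph{light child} $\ell(A)$. The structural fact that drives everything: iterating the heavy-child map from any $A$ yields a chain $A=A_0,A_1,\dots,A_t$ ending at a letter-producing nonterminal, whose members are pairwise distinct because the relation ``$Y$ occurs in $\rho(X)$'' is acyclic, so $t<|\G|$; moreover the heavy chain of $A_1$ is exactly the suffix $A_1,\dots,A_t$. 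Consequently, in the (exponentially large) derivation tree, following heavy edges from a node labelled $A$ traces precisely this chain, so every maximal heavy path is short and is identified by the nonterminal at its top.

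For each $A$ with heavy chain $A_0,\dots,A_t$ one reads off $\exp_{\G}(A)=L_A\cdot\sigma_A\cdot R_A$, where $\sigma_A$ is the letter produced by $A_t$, the string $L_A$ is the concatenation of the light-child expansions $\exp_{\G}(\ell(A_j))$ over the indices $j$ at which $\mathrm{hc}(A_j)$ is the right child (in increasing order of $j$), and $R_A$ is the symmetric concatenation over the left-heavy indices (in decreasing order of $j$). Since the heavy chain of $A_1$ is a suffix of that of $A_0$, we have $L_A\in\{\,L_{A_1},\ \exp_{\G}(\ell(A))\cdot L_{A_1}\,\}$ and $R_A\in\{\,R_{A_1},\ R_{A_1}\cdot\exp_{\G}(\ell(A))\,\}$: a single step up a heavy chain pushes at most one new item onto one end of these sequences. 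I would therefore process the nonterminals in a topological order (heavy chains from the bottom up) and represent the sequence of light-child items underlying each $L_A$ and $R_A$ by a \emph{weight-balanced} binary tree of new nonterminals (each leaf an original nonterminal, carrying its weight), built so that consecutive members of a heavy chain share essentially all of their $L/R$ trees. Finally, for every original $A$ add a new nonterminal deriving $L_A\cdot\sigma_A\cdot R_A$ via two concatenations, and let the copy of $\start$ be the start symbol of the new grammar $\G'$.

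The depth bound then falls out by tracing a root-to-leaf path in the derivation tree of $\G'$: it alternates between descending inside one of the weight-balanced trees — where each level drops the weight of the current subtree by a constant factor — and crossing from an $L_A$ or $R_A$ leaf into the corresponding light child $\ell(A_j)$, at which point the weight has gone from $w(A)$ down to $w(\ell(A_j))\le w(A_j)/2\le w(A)/2$; since the number of steps taken inside the tree for this descent is $\Oh(\log(w(L_A)/w(\ell(A_j))))$, summing over the whole path telescopes to $\Oh(\log N)$. The one genuinely delicate point — and the reason the obvious ``one balanced tree per heavy path'' gives only $\Oh(|\G|\log N)$ — is to get total size $\Oh(|\G|)$, which forces the $L/R$ trees of consecutive nonterminals on a heavy chain to be shared almost in their entirety, while keeping those trees weight-balanced enough that the constant-factor-per-step weight decrease above still holds across the boundaries between successive heavy paths. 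Reconciling this sharing with a clean weighted potential that telescopes to $\Oh(\log N)$ is exactly the technical core of the Ganardi--Jeż--Lohrey construction, which we use as a black box.
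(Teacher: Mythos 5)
The paper does not prove this statement at all: it is invoked purely as a citation of Theorem~1.2 of Ganardi, Jeż, and Lohrey, exactly as you do, and your accompanying outline of their heavy-path/weight-balanced construction (with the size-$\Oh(|\G|)$ sharing deferred to the original work) is a faithful summary rather than a divergence. So your treatment matches the paper's approach.
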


We obtain a 1D SLP $\G' = (\nonterm', \start', \rho')$ of size $\Oh(|\G| \cdot N)$ and depth $\Oh(\log M)$, deriving
\[
  T[1] \ohrz T[2] \ohrz \cdots \ohrz T[N] \in \Sigma^{1 \times NM}.
\]
We view $\G'$ as a 2D SLP that uses only horizontal concatenations.
We will now modify it, so that it again derives the original 2D string $T$ while maintaining its logarithmic depth.

\begin{lemma}[Folklore]\label{lemma:folklore decomposition}
	Let $\H$ be a 1D SLP of depth $d$ deriving a string $S$.
	Any substring of $S$ can be expressed as a concatenation of $\Oh(d)$ expansions of nonterminals of $\H$.
\end{lemma}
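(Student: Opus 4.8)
The plan is to prove \Cref{lemma:folklore decomposition} by a standard recursive descent along the derivation tree, tracking a single contiguous range of leaves. Let $S = S[1\dd n]$ be the string derived by $\H$, and suppose we want to express the substring $S[a \dd b]$. I would define a recursive procedure $\textsc{Decompose}(X, p, q)$ that, given a nonterminal $X$ with $\exp_\H(X) = S'$ of length $m$ and a range $[p \dd q] \subseteq [1 \dd m]$, returns a sequence of nonterminals whose expansions concatenate to $S'[p \dd q]$. The base case is $[p \dd q] = [1 \dd m]$, in which case we simply return $X$ itself. Otherwise, if $\rho(X) = YZ$ with $|\exp_\H(Y)| = m_Y$, the range $[p \dd q]$ either lies entirely within $Y$'s part, entirely within $Z$'s part, or straddles the boundary; in the first two cases we recurse into a single child, and in the straddling case we recurse into $Y$ with the suffix range $[p \dd m_Y]$ and into $Z$ with the prefix range $[1 \dd q - m_Y]$, and concatenate the two results.

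The key observation making the bound $\Oh(d)$ work is that each recursive call is of one of two kinds: a \emph{full-match} call (which terminates immediately, contributing one nonterminal and no further recursion) or a \emph{partial-match} call (which may recurse further). I would prove by a simple structural argument that along any root-to-leaf path of recursive calls, after the first partial call that straddles the boundary we spawn exactly two children, one asking for a \emph{prefix} of its nonterminal's expansion and one asking for a \emph{suffix}. A call asking for a prefix $[1\dd q]$ of $\exp_\H(YZ)$ either becomes a full-match call on $Y$ plus a prefix call on $Z$, or a single prefix call on $Y$ (if $q \le m_Y$); symmetrically for suffix calls. Hence each level of the derivation tree contributes at most one prefix call and at most one suffix call that recurse further, so the total number of recursive calls is $\Oh(d)$, and each contributes at most one nonterminal to the output — giving $\Oh(d)$ nonterminals in total.

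Concretely I would set up the accounting as follows. Starting from $\textsc{Decompose}(\start, a, b)$, follow the unique chain of calls that keep straddling the boundary; this chain has length at most $d$ and at each step it may emit at most one "completed" child (a full subtree hanging off to the left or right) plus it splits into a prefix-query branch and a suffix-query branch exactly once. Thereafter, the prefix-query branch descends monotonically (each step either completes a left child and continues right, or just continues left), producing at most one new nonterminal per level for at most $d$ levels; likewise for the suffix-query branch. Summing, the output size is at most $2d + \Oh(1) = \Oh(d)$, and the concatenation of the returned expansions equals $S[a \dd b]$ by an immediate induction on the recursion.

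The main obstacle — though it is more a matter of careful bookkeeping than genuine difficulty — is making the three-way case analysis and the prefix/suffix invariant fully rigorous without double-counting: one must be careful that the "completed" children emitted along the straddling chain, the completed children emitted along the prefix branch, and those along the suffix branch together number $\Oh(d)$, and that no nonterminal is emitted twice. I would handle this by assigning to each emitted nonterminal the depth of the call that emitted it, and showing that at most a constant number of nonterminals are emitted per depth value, so that the total is bounded by a constant times the maximum depth $d$. The correctness of the concatenation is then a routine induction and needs no separate argument.
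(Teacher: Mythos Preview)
Your approach is correct and is essentially the same as the paper's, which phrases the identical decomposition more tersely: it takes the lowest common ancestor $u$ of the $i$-th and $j$-th leaves in the derivation tree and collects the siblings hanging inward off the two $u$-to-leaf paths (together with the boundary leaves). Your recursive descent locates $u$ at the first straddling call and then your prefix and suffix branches trace exactly those two paths, emitting precisely those hanging siblings as the ``completed'' children.
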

\begin{proof}
	Take any substring from position $i$ to $j$.
	In the derivation tree of $\H$ let $u$ be the lowest common ancestor of the $i$-th and $j$-th leaf.
	Along the two root-to-leaf paths from $u$, take the siblings hanging to the right of the left path and to the left of the right path, together with the boundary leaves.
	These $\Oh(d)$ nonterminals expand to the substring.
\end{proof}

Since every row $T[i]$ is a substring of $\exp(\G')$, we can use \Cref{lemma:folklore decomposition} to identify $\Oh(\log M)$ nonterminals $X_1, X_2, \dots, X_k$, such that
\[
  T[i] = \exp_{\G'}(X_1) \ohrz \exp_{\G'}(X_2) \ohrz \cdots \ohrz \exp_{\G'}(X_k).
\]
Then for each $i$, we introduce a new nonterminal $R_i$ by applying \cref{lem:gadget} to $X_1,\dots,X_k$, implementing the production
\[
  \rho'(R_i) \coloneqq X_1 \ohrz X_2 \ohrz \cdots \ohrz X_k.
\]
Each gadget adds $\Oh(\log M)$ depth locally and $\Oh(\log M)$ symbols, but since the $R_i$'s are created independently (none of them occurs in another’s subtree), the overall depth of the grammar increases by only an additive $\Oh(\log \log M)$ term.
The total increase in size is $\Oh(N \log M)$, which is $\Oh(|\G| \cdot N)$, because 
of the straightforward lower bound of the $\Omega(\log M)$ on the size of $\G$.

Finally, we introduce a new starting nonterminal $\start''$ with the production
\[
  \rho'(\start'') \coloneqq R_1 \ovrt R_2 \ovrt \cdots \ovrt R_N,
\]
constructed using \cref{lem:gadget}, adding $\Oh(N)$ to the size and $\Oh(\log N)$ to the depth of $\G'$.
Then
\[
  \exp_{\G'}(\start'') = \exp_{\G'}(R_1) \ovrt \cdots \ovrt \exp_{\G'}(R_N) = T[1] \ovrt \cdots \ovrt T[N] = T.
\]
The resulting grammar has size $\Oh(|\G| \cdot N)$, and depth $\Oh(\log M)$.

\section{Lower Bound for Balancing Without Holes}
\label{sec:lb}
In this section, we prove that the linear multiplicative blowup of \cref{theorem:noholebalancing} is essentially unavoidable in the general case.
Formally, we prove the following.
\begin{theorem}\label{lem:lowerbound}
     For every integer $c$, there are infinitely many integers $N$ for which there exists a 2D SLP $\G$ with $g$ nonterminals deriving a string $S\in\Sigma^{N\times N}$ such that any 2D SLP deriving $S$ has either depth at least $c\log N$ or $\Omega(g \cdot \frac{N}{\log^3N})$ nonterminals.\end{theorem}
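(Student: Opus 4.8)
The plan is to exhibit a single 2D string $S \in \Sigma^{N\times N}$ that is highly compressible by an unbalanced 2D SLP but rigid enough that any shallow 2D SLP for it must be large. The natural candidate is a string built so that its $N$ rows are $N$ \emph{distinct} strings, each individually very compressible, but collectively forcing a shallow grammar to ``spend'' productions per row. Concretely, I would take $N$ to be a power of two and let row $i$ be a length-$N$ string that encodes the integer $i$ in a way that makes consecutive rows share almost nothing under a low-depth decomposition — for instance, row $i$ could be obtained from a fixed ``template'' word of length $N$ by a position-dependent perturbation governed by the binary expansion of $i$, so that the whole $S$ is derivable by a small 2D SLP (size $g = \Oh(\log^2 N)$, say, using $\ovrt$ to stack recursively-defined blocks of rows) but no two rows are equal and, more importantly, the rows are ``independent'' in the sense below.

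\textbf{Upper bound on the small grammar.} First I would construct the succinct 2D SLP $\G$ of size $g$ deriving $S$. Using vertical concatenation one builds, for each $k$, a nonterminal deriving the block of rows $[2^k]$ by combining two copies of the block for $[2^{k-1}]$ after applying a single ``twist'' production (a horizontal-concatenation-based gadget) that accounts for the differing bit; with the rows themselves being, say, runs or simple periodic words this keeps each level $\Oh(1)$ productions, for a total of $\Oh(\log N)$, and one can further compress to make $g$ as small as the construction allows (the theorem only needs \emph{some} $g$, and the blowup is stated multiplicatively in $g$, so even $g = \Theta(\log N)$ suffices, yielding the bound $\Omega(g\cdot N/\log^3 N) = \Omega(N/\log^2 N)$ — but I would aim for the cleanest construction). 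This is the routine part.

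\textbf{Lower bound for shallow grammars.} This is the heart of the argument and the main obstacle. Suppose $\H$ is a 2D SLP deriving $S$ of depth $d \le c\log N$. The key structural fact I would prove is an analogue of \Cref{lemma:folklore decomposition} in two dimensions: because every production of a 2D SLP splits a rectangle either horizontally or vertically, each \emph{row} $S[i]$ of the derived string, viewed as a horizontal substring, is expressible as a concatenation of $\Oh(d) = \Oh(\log N)$ expansions of nonterminals of $\H$ (restricted to the appropriate row), by tracing the $i$-th row through the derivation tree and collecting the $\Oh(d)$ maximal subrectangles whose row ranges contain $i$ and whose expansions tile that row. Thus each of the $N$ rows gets a ``signature'': a sequence of $\Oh(\log N)$ (nonterminal, row-offset) pairs. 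Now I would argue that two rows with the same signature must be equal — which, by distinctness of the rows of $S$, forces all $N$ signatures to be distinct. Counting: if $\H$ has $g'$ nonterminals, the number of distinct signatures of length $\Oh(\log N)$ over an alphabet of size $g' \cdot N$ (nonterminal paired with an offset in $[N]$) is at most $(g'N)^{\Oh(\log N)}$; but a cruder and more robust counting argument is better here. The cleaner route: each nonterminal of $\H$ of height $h$ contributes at most $h \le N$ distinct row-slices, and the total number of distinct horizontal factors appearing across all rows of $S$ that arise as such slices is at most $\sum_X \h(X) \le g' N$; but each row needs $\Omega(\log N)$ of them and the rows are pairwise ``far'' — so I would instead do a direct charging: assign to each row $i$ one nonterminal $X_i$ from its $\Oh(\log N)$-term decomposition that ``captures'' a distinguishing block of row $i$ (using the way the perturbation was designed so that every row has a position where it differs from every other row that could share a common ancestor block), and show each nonterminal $X$ of height $h$ can be charged by only $\Oh(h \cdot \mathrm{poly}\log N)$ rows because its $h$ row-slices, combined with the $\Oh(\log N)$ choices of which term in the decomposition and $\Oh(\log N)$ offsets, cover only that many distinct distinguishing blocks. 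Summing, $N \le \sum_X \Oh(\h(X)\cdot \log^2 N) $ is \emph{not} directly what I want — rather I want $N \le g' \cdot \Oh(\log^3 N)$, so the charging must be that each nonterminal is charged $\Oh(\log^3 N)$ times total (one extra $\log N$ coming from bounding the number of distinct $\h(X)$-values or from the depth-based decomposition contributing $\log N$ slices that matter). Rearranging gives $g' = \Omega(N/\log^3 N) = \Omega(g \cdot N/\log^3 N)$ since $g = \Oh(\log N)$ and one can absorb constants, or more honestly one designs $\G$ with $g$ a free parameter by taking $\Theta(g)$ independent copies of the hard instance side by side, making the bound genuinely $\Omega(g\cdot N/\log^3 N)$.

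\textbf{Main difficulty.} The crux is designing the perturbation of the rows so that the ``rigidity'' step goes through: I need that whenever two rows $i \ne j$ are both sliced by a common nonterminal $X$ of $\H$ at the same internal position (because $X$'s subrectangle spans both row $i$ and row $j$ in the derivation tree), the portions of rows $i$ and $j$ inside $X$ are \emph{forced to differ}, so that $X$ can only be responsible for few rows. Equivalently, the construction must ensure that no large horizontal block can be shared verbatim between many rows — this is exactly the property that defeats vertical sharing, which is the only mechanism a 2D SLP has to compress across rows cheaply. I expect to realize this with a ``staircase'' or Reed–Solomon-like encoding: row $i$ written as $N$ blocks of size $\Theta(\log N)$, the $b$-th block encoding $(i, b)$ via a code with large distance, so any two rows disagree in a constant fraction of every window of $\Theta(\log N)$ consecutive blocks — giving both the $\Oh(\log N)$-per-level bound in $\G$ (the encoding is algebraic/recursive) and the forced-disagreement property in $\H$. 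Verifying the depth-$\Oh(\log N)$ succinct grammar and the $\log^3 N$ bookkeeping in the charging are the remaining, more mechanical, tasks.
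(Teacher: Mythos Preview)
Your row-decomposition observation is correct: in a 2D SLP of depth $d$, each row $S[i]$ is tiled by $\Oh(d)$ row-slices of nonterminals (trace the path to the leftmost leaf of row $i$ and collect right siblings at horizontal productions). But the counting/charging step that follows does not go through, and this is the essential gap. You correctly note that your first count, $N \le \sum_X \Oh(\h(X)\cdot \polylog N)$, is useless; you then assert that ``each nonterminal is charged $\Oh(\log^3 N)$ times total'' without argument. This assertion is exactly what fails: a single tall nonterminal $X$ with $\h(X)=\Theta(N)$ supplies $\Theta(N)$ distinct row-slices, and nothing in your rigidity hypothesis (pairwise-distinct rows, or even rows that disagree on every window) prevents $X$ from being the distinguished piece for $\Theta(N)$ different rows simultaneously. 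Row-distinctness constrains the \emph{content} of slices, not the number of \emph{nonterminals} needed to produce them; vertical concatenation lets a shallow grammar reuse one tall $X$ across all its $\h(X)$ rows at unit cost. Your Reed--Solomon idea makes long substrings row-identifying, which gives that distinct rows use distinct (nonterminal, offset) pairs---but the number of such pairs is $\sum_X \h(X)$, not $g'$, so you are back where you started.

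The paper avoids counting entirely. Its mechanism is: (i) build a gadget $\ShiftBin_{M'}$ whose central rows are \emph{1D-incompressible}---any 1D SLP for a superstring of such a row needs $\Omega(M')$ nonterminals (\cref{lem:shiftbincut}); (ii) observe that a 2D SLP of size $g'$ yields a 1D SLP of size $\le g'$ for the top or bottom row of any nonterminal's expansion (\cref{obs:2dframe1d}); (iii) arrange the gadgets in a spiral of $\Theta(c\log N)$ nested layers around the center of $S$. Now follow the root-to-center path in any derivation tree of depth $< c\log N$. By pigeonhole some single production on this path must ``jump'' two layers, meaning a vertical (or horizontal) cut whose boundary row (or column) slices through the interior of a $\ShiftBin$ gadget. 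Combining (i) and (ii) forces $g' = \Omega(M') = \Omega(N/\log^2 N)$. The crucial difference from your approach is that 1D-incompressibility of a \emph{single} row, via the margin lemma, lower-bounds $g'$ directly; you never need to aggregate over rows or control $\h(X)$.
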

We begin by presenting some gadgets that will be useful for proving \cref{lem:lowerbound}. Intuitively, our goal is to
obtain a highly compressible 2D string such that most of its rows correspond to very non-compressible 1D strings.
Our construction can be seen as a stronger variant of the gadget used by Berman et al.~\cite[Theorem 2.2]{DBLP:journals/jcss/BermanKLPR02}
(using their gadget directly would only allow us to obtain a lower bound of roughly $\Omega(g\cdot \sqrt{N})$ in \cref{lem:lowerbound}).

\begin{definition}[See \cref{fig:Examples1}]
	For every $N=2^{n}$, we define $\Bin_N$ as the 2D string with dimensions $ N\times (n + 2) $ in which the $i$-th row is the $n$-bit binary representation of the integer $i-1$, surrounded by $\$$ symbols.
Next, we define $\ShiftBin_N$ as the binary 2D string with dimensions $2N \times N (n +2)$ such that for every $i\in [0 \dd N-1]$, the substring $\ShiftBin_N[1 \dd 2N][i\cdot (n+2)+1  \dd  (i+1)\cdot (n+2) ]$ is a copy of $\Bin_N$ `shifted' downwards by $i-1$ rows.
\end{definition}

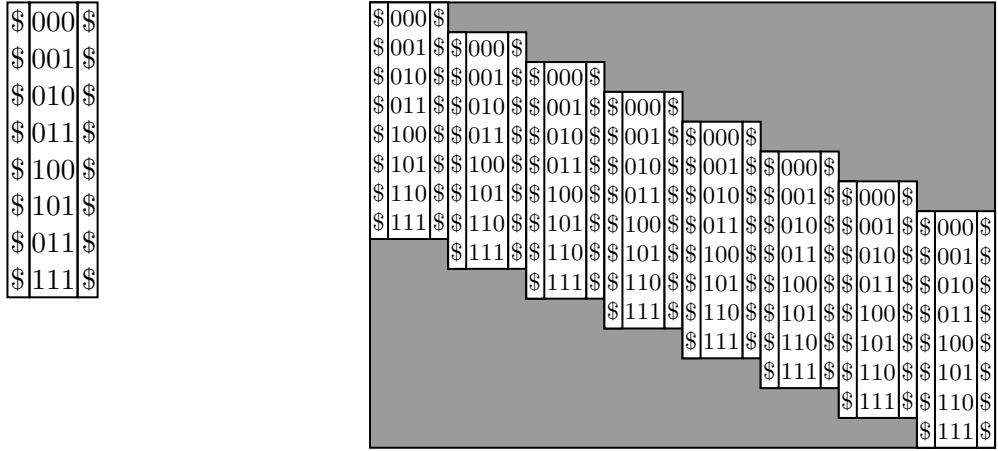
\begin{figure}[h!]
    \centering
    \begin{subfigure}[T]{0.3\textwidth}
    \begin{center}
        \tikzset{every picture/.style={line width=0.75pt}} 
\begin{tikzpicture}[x=0.75pt,y=0.75pt,yscale=-1,xscale=1]
 \begin{scope}[shift={(0, 0)}]

\draw   (288,56.67) -- (299,56.67) -- (299,205) -- (288,205) -- cycle ;
\draw   (299,56.67) -- (323,56.67) -- (323,205) -- (299,205) -- cycle ;
\draw   (323,56.67) -- (333,56.67) -- (333,205) -- (323,205) -- cycle ;

\draw (288,58.67) node [anchor=north west][inner sep=0.75pt]   [align=left] {$ \displaystyle \$\:000\:\$ $};
\draw (288,77.21) node [anchor=north west][inner sep=0.75pt]   [align=left] {$\displaystyle \$\:001\:\$ $};
\draw (288,169.91) node [anchor=north west][inner sep=0.75pt]   [align=left] {$\displaystyle \$\:011\:\$ $};
\draw (288,151.37) node [anchor=north west][inner sep=0.75pt]   [align=left] {$\displaystyle \$\:101\:\$ $};
\draw (288,95.75) node [anchor=north west][inner sep=0.75pt]   [align=left] {$\displaystyle \$\:010\:\$ $};
\draw (288,132.83) node [anchor=north west][inner sep=0.75pt]   [align=left] {$\displaystyle \$\:100\:\$ $};
\draw (288,114.29) node [anchor=north west][inner sep=0.75pt]   [align=left] {$\displaystyle \$\:011\:\$ $};
\draw (288,188.45) node [anchor=north west][inner sep=0.75pt]   [align=left] {$\displaystyle \$\:111\:\$ $};
\end{scope}
\end{tikzpicture}
        \end{center}
        \label{fig:BinExample}
    \end{subfigure}
    \hfill
    \begin{subfigure}[T]{0.6\textwidth}
        \tikzset{every picture/.style={line width=0.75pt}} 

\begin{tikzpicture}[x=0.75pt,y=0.75pt,yscale=-1,xscale=1]
\draw  [fill={rgb, 255:red, 155; green, 155; blue, 155 } ] (7,0) -- (319,0) -- (319,224) -- (7,224) -- cycle ;

  \foreach \i in {0,...,7} {
    \begin{scope}[shift={(\i*39, \i*15)}]
      \draw  [fill={rgb, 255:red, 255; green, 255; blue, 255 }  ,fill opacity=1 ] (7,0) -- (37,0) -- (37,119) -- (7,119) -- cycle ;
\draw  [fill={rgb, 255:red, 255; green, 255; blue, 255 }  ,fill opacity=1 ] (37,0) -- (46,0) -- (46,119) -- (37,119) -- cycle ;
\draw [fill={rgb, 255:red, 255; green, 255; blue, 255 }  ,fill opacity=1 ]  (7,0) -- (16,0) -- (16,119) -- (7,119) -- cycle ;

\draw (0,0) node [anchor=north west][inner sep=0.75pt]  [font=\footnotesize] [align=left] {$\displaystyle  \begin{array}{{>{\displaystyle}l}}
\$\:000\:\$\\
\$\:001\:\$\\
\$\:010\:\$\\
\$\:011\:\$\\
\$\:100\:\$\\
\$\:101\:\$\\
\$\:110\:\$\\
\$\:111\:\$
\end{array}$};
    \end{scope}
}

\end{tikzpicture}
        \label{fig:ShiftBinExample}
    \end{subfigure}
    \caption{Left: $\Bin_{8}$, right: $\ShiftBin_8$ (the gray area is all $0$'s).}
    \label{fig:Examples1}
\end{figure}

We observe that $\Bin_N$ and (more importantly) $\ShiftBin_N$ are highly compressible.
\begin{lemma}
For every $N=2^n$, there is a 2D SLP with $\Oh(n)$ nonterminals deriving $\ShiftBin_{N}$.
\end{lemma}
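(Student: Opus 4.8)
The plan is to build the grammar for $\ShiftBin_N$ bottom-up, first producing $\Bin_N$ with $\Oh(n)$ nonterminals, then assembling the $N$ horizontally-placed, vertically-shifted copies using a divide-and-conquer construction that shares subtrees aggressively. The key insight is that the shifts are arithmetic: copy $i$ is pushed down by $i-1$ rows, so a contiguous block of copies $i, i+1, \dots, i+2^k-1$ looks exactly like the block $0,1,\dots,2^k-1$ but translated diagonally. This self-similarity is what lets us avoid paying $\Omega(N)$ nonterminals.

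First I would construct $\Bin_N$. The columns are easy: each of the $n$ bit-columns of $\Bin_N$ is a periodic binary string of the form $(0^{2^{n-1-j}}1^{2^{n-1-j}})^{2^j}$ for column $j$, which admits an SLP of size $\Oh(n-j)$ by repeated doubling of $0$-blocks and $1$-blocks and then repeated doubling of the period; summing over all columns gives $\Oh(n^2)$, but we can do better by sharing the doubling chain $0, 00, 0000, \dots$ and $1, 11, \dots$ across all columns, bringing it to $\Oh(n)$ vertical-concatenation nonterminals for the columns, plus $\Oh(1)$ for the two $\$$ columns (each is $\$^N$, built by doubling). Then $\Bin_N$ itself is the horizontal concatenation of these $n+2$ columns, an $\Oh(n)$-size horizontal gadget via \cref{lem:gadget}.

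Next I would build $\ShiftBin_N$ recursively. Define, for $k \in [0\dd n]$, a nonterminal $B_k$ whose expansion is the $(2^k + \text{something}) \times 2^k(n+2)$ block consisting of copies $0, 1, \dots, 2^k-1$ of $\Bin_N$ shifted down by $0, 1, \dots, 2^k-1$ rows respectively, padded with $0$'s to a common rectangle; more precisely it is convenient to let $B_k$ be exactly the staircase region together with enough $0$-padding above and below so that the left half and right half of $B_{k+1}$ are both translates of $B_k$. The recursion is $B_{k+1} = (\text{top-left} = B_k \text{ over zeros}) \ohrz (\text{bottom-right} = \text{zeros over } B_k)$, i.e. horizontally concatenate two copies of $B_k$, the second one shifted down by $2^k$ rows, which in SLP terms means prepending a $0$-block of height $2^k$ on top and appending a $0$-block of appropriate height on bottom to make the widths... wait, heights... match. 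Since $B_k$ appears twice but is a single shared nonterminal, and each level adds $\Oh(1)$ nonterminals plus $\Oh(k) = \Oh(n)$ for the zero-padding rectangles (themselves built by doubling in $\Oh(\log(\text{dimension})) = \Oh(n)$), the total is $\Oh(n^2)$ at worst — and with shared doubling chains for the zero rectangles, $\Oh(n)$. Finally $\ShiftBin_N$ is $B_n$ with its top and bottom $0$-padding trimmed to exactly $2N$ rows, which is one more vertical-concatenation extraction, or simply define $B_n$ to already have the right height.

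The main obstacle will be bookkeeping the vertical alignment so that the two halves of $B_{k+1}$ genuinely coincide with translates of one shared $B_k$ — this forces a careful choice of how much $0$-padding to include in $B_k$ (it must be symmetric enough to absorb the shift at every level), and getting the heights to match for the horizontal concatenation. A secondary obstacle is keeping the zero-padding rectangles from costing $\Oh(n)$ \emph{per level}: the fix is to precompute a single chain of nonterminals deriving $0^{1\times m}$, $0^{2\times m}$, $0^{4 \times m}$, \dots\ (and the transposed chain) once, of total size $\Oh(n)$, and reuse them, so that each level's padding is assembled from $\Oh(1)$ of these shared blocks via $\Oh(1)$ concatenations. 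With that sharing in place the final count is $\Oh(n)$ nonterminals, as claimed.
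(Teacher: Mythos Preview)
Your divide-and-conquer assembly of $\ShiftBin_N$ from $\Bin_N$ is exactly the paper's construction: the paper sets $A_0 = \Bin_N$ and $A_i = (A_{i-1} \ovrt Z_i) \ohrz (Z_i \ovrt A_{i-1})$, where $Z_i$ is an all-zeros rectangle of height $2^{i-1}$, and observes (as you do) that all the $Z_i$ together cost only $\Oh(n)$ nonterminals because $Z_{i+1}$ is assembled from a constant number of copies of $Z_i$. So that half of your argument is correct and matches the paper.

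The gap is in your construction of $\Bin_N$. Sharing the chains $0, 00, 0000, \dots$ and $1, 11, 1111, \dots$ does \emph{not} bring the column-by-column cost down to $\Oh(n)$. After that sharing, column $j$ still needs one nonterminal for its period $P_j = 0^{2^{n-1-j}} 1^{2^{n-1-j}}$ and then $j$ fresh nonterminals to double $P_j$ up to $C_j = P_j^{2^j}$. These $j$ doublings produce the strings $P_j, P_j^2, P_j^4, \dots$, which are pairwise distinct from the intermediate strings of every other column (either the period differs or the length differs), so nothing here can be reused across columns. Summing gives $\sum_{j=0}^{n-1} (1+j) = \Theta(n^2)$, and your grammar for $\Bin_N$---hence for $\ShiftBin_N$---has $\Theta(n^2)$ nonterminals, not $\Oh(n)$.

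The paper sidesteps this by building $\Bin'_N$ (the array without the $\$$ margins) recursively on the height rather than column by column: $\Bin'_{2^{i+1}}$ is the vertical stack of $\bigl(\text{a height-}2^i\text{ column of }0\text{'s} \ohrz \Bin'_{2^i}\bigr)$ on top of $\bigl(\text{a height-}2^i\text{ column of }1\text{'s} \ohrz \Bin'_{2^i}\bigr)$. Given the shared $0$/$1$ column chains this is $\Oh(1)$ new nonterminals per level, hence $\Oh(n)$ for $\Bin_N$, and then the rest of your argument goes through unchanged.
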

\begin{proof}
    For $N=2^{n}$, let $\Bin'_N$ denote the substring of $\Bin_N$ obtained by removing the $\$$ symbols on the margins.
    We start by showing that $\Bin'_{N}$ can be derived with a grammar of size $\Oh(n)$.
    Since $\Bin'_2$ has constant size, it can be derived using a constant number of nonterminals;
	let the starting nonterminal of the grammar be $\start_1$.
    We also construct nonterminals $0_0$ and $1_0$ that produce a single $0$ and a single $1$, respectively. 
    To construct $\Bin'_{2^{i+1}}$ for $i\ge 1$, we inductively define nonterminals $\start_{i+1}$, $\start^0_{i+1}$, $\start^1_{i+1}$, $0_{i+1}$, and $1_{i+1}$ with the productions
    \begin{enumerate}
    \item $\rho(0_{i+1})   \coloneqq 0_{i}\ovrt 0_{i}$,
    \item $\rho(1_{i+1})   \coloneqq 1_{i} \ovrt 1_{i}$,
    \item $\rho(\start^0_{i+1}) \coloneqq 0_{i+1} \ohrz \start_{i}$,
    \item $\rho(\start^1_{i+1}) \coloneqq 1_{i+1} \ohrz \start_{i}$,
    \item $\rho(\start_{i+1})   \coloneqq \start^0_{i+1} \ovrt \start^1_{i+1}$
    \end{enumerate}
	where we inductively assume that $\exp(\start_i) = \Bin'_{2^i}$.
    It can be verified that $0_{i+1}$ (resp. $1_{i+1}$) expands to a column of $0$ symbols (resp. $1$ symbols) with height $2^{i}$.
    Then, $\start^0_{i+1}$ expands to a row of $2^i$ zeros followed by $\Bin'_{2^i}$, and $\start^1_{i+1}$ expands to a row of $2^i$ ones followed by $\Bin'_{2^i}$. 
    The vertical concatenation of $\start^0_{i+1}$ and $\start^1_{i+1}$ expands to exactly $\Bin'_{2^{i+1}}$, as required.
    We introduce $\Oh(1)$ new nonterminals in order to extend the grammar of $\Bin'_{2^i}$ to the grammar of $\Bin'_{2^{i+1}}$, so the total number of nonterminals for $\Bin'_{2^n}= \Bin'_{N}$ is $\Oh(n)$. 
    The $\$$ symbols on the margins can be added using $\Oh(n)$ extra nonterminals. Hence, we can construct a grammar with $\Oh(n)$ nonterminals deriving $\Bin_{N}$.
    
    We proceed to describe a 2D SLP for $\ShiftBin_{N}$.
    Starting with $\start_n$, we inductively construct nonterminals that expand to shifted concatenations of $\Bin_{N}$.
    Formally, we let $A_0 \coloneqq \start_n$, and for every $i>0$ we define $A_i$, $A^0_i$, $A^1_i$, and $Z_i$ such that:
    \begin{enumerate}
		\item $Z_i$ expands to a matrix of $0$'s with dimensions $2^{i-1}\times 2^{i-1}\cdot n+2$,
        \item $\rho(A^0_i) \coloneqq A_{i-1} \ovrt Z_{i}$,
        \item $\rho(A^1_i) \coloneqq Z_i \ovrt A_{i-1}$,
        \item $\rho(A_i)   \coloneqq A^0_i \ohrz A^1_i$.
    \end{enumerate}
   
    It can be verified by induction that $A_{i}$ expands to a sequence of $2^i$ copies of $\Bin_N$, each copy shifted by 1 relatively to the copy to its left.
    In particular, $A_n$ expands to exactly $\ShiftBin_{N}$.
    We only introduced $\Oh(n)$ nonterminals on top of the nonterminals $\start_n$ to construct $A_n$ (note that all $Z_i$ nonterminals can be constructed using additional $\Oh(n)$ nonterminals), so the total size of the grammar deriving $\ShiftBin_{N}$ is $\Oh(n)$ as required.
\end{proof}

Even though $\ShiftBin_N$ is highly compressible, it contains very incompressible 1D strings.
In the following lemma, we show that the central rows of $\ShiftBin_N$ (i.e., rows that are a constant fraction of the height of $\ShiftBin_N$ away from the top and bottom) are highly incompressible.
\begin{lemma}\label{lem:shiftbincut}
    For every $N=2^n$, and $i\in [2N]$, denote as $R_{N,i}$ the $i$-th row of $\ShiftBin_{N}$.
	Any (1D) SLP deriving a superstring of $R_{N,i}$ is of size at least $\min\{i, 2N-i+1\}$. 
\end{lemma}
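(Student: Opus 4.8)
The plan is to locate inside $R_{N,i}$ a concatenation of $m := \min\{i, 2N-i+1\}$ short $\$$-delimited blocks with pairwise distinct contents, and then show that in any 1D SLP that derives a string containing this concatenation, each block must be ``resolved'' by a distinct nonterminal. Concretely, unfolding the definition of $\ShiftBin_N$, the row $R_{N,i}$ read left to right is: a (possibly empty) run of all-$0$ width-$(n+2)$ blocks, then a maximal run of blocks $\sigma_1,\dots,\sigma_m$ with $\sigma_t = \$\,\mathrm{bin}_n(v_t)\,\$$ where $v_1,\dots,v_m$ are $m$ consecutive integers in $[0\dd 2^n-1]$, then another (possibly empty) run of all-$0$ blocks; the count $m = \min\{i,2N-i+1\}$ is a routine index computation from the definition (and when $m=0$ the claimed bound is $1$, which holds trivially since $R_{N,i}$ is non-empty). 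The $v_t$ are distinct, so the $\sigma_t$ are distinct strings. Set $W := \sigma_1 \ohrz \cdots \ohrz \sigma_m$, a substring of $R_{N,i}$; it then suffices to prove that any 1D SLP $\G$ with $W$ occurring in $\exp(\G)$ has at least $m$ nonterminals, hence $|\G| \ge m$.

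The argument assigns to each block a nonterminal of $\G$ and shows these are pairwise distinct. Fix an occurrence of $W$ in $S := \exp(\G)$, and for $t \in [m]$ let $I_t \subseteq [1\dd|S|]$ be the interval of positions occupied by the copy of $\sigma_t$ inside it, so $|I_t| = n+2$. Using the correspondence between leaves of $\T_\G$ and positions of $S$, each node of $\T_\G$ is associated with a contiguous sub-interval of $[1\dd|S|]$; let $u_t$ be the lowest node whose interval contains $I_t$. Assuming $n \ge 1$ (the case $n=0$ being degenerate), $|I_t| = n+2 \ge 3$, and since every production of a 1D SLP is a letter or a horizontal concatenation $Y \ohrz Z$, the node $u_t$ must carry a concatenation production $A_t \to B_t \ohrz C_t$; minimality of $u_t$ forces $I_t$ to straddle the boundary between $\exp_\G(B_t)$ and $\exp_\G(C_t)$. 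Read inside the string $\exp_\G(A_t)$, this says precisely: $\exp_\G(A_t)$ has an occurrence of $\sigma_t$ starting at an offset $o_t$ with $\w(B_t) - o_t \in [0\dd n]$.

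It remains to show $A_1,\dots,A_m$ are distinct. Suppose $A_t = A_{t'} =: A$ for $t \ne t'$, with production $A \to B \ohrz C$, and put $\ell := \w(B)$. From the two nodes $u_t, u_{t'}$ (which may differ but carry the same label $A$) we learn that the single string $\exp_\G(A)$ contains an occurrence of $\sigma_t$ at offset $o_t \in [\ell-n\dd\ell]$ and an occurrence of $\sigma_{t'}$ at offset $o_{t'} \in [\ell-n\dd\ell]$. Both occurrences lie inside $\exp_\G(A)$ and start within $n$ of each other, whereas each has length $n+2$, so they overlap. But every $\sigma$ has the shape ``$\$$, then $n$ symbols different from $\$$, then $\$$'': if one occurrence began strictly inside the other, its leading $\$$ would land on one of the $n$ interior symbols of the other, which are not $\$$ — a contradiction. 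Hence $o_t = o_{t'}$, which forces $\sigma_t = \sigma_{t'}$ and thus $v_t = v_{t'}$, contradicting distinctness. Therefore $A_1,\dots,A_m$ are $m$ distinct nonterminals, each contributing at least one symbol to $|\G|$, so $|\G| \ge m$.

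Everything up to the last step is routine: the structural description of a row of $\ShiftBin_N$, the choice of the minimal covering node $u_t$, and the observation that $u_t$ carries a concatenation whose split falls strictly inside $\sigma_t$. The crux — and the one place where the $\$$ symbols in the definition of $\Bin_N$ are essential — is the final contradiction: a coincidence $A_t = A_{t'}$ would place two overlapping occurrences of distinct $\$$-bracketed binary words inside a single expansion, and the brackets forbid that. (Intuitively, without the separators the consecutive binary words could all be generated by one shared, highly compressible ``counter'' mechanism, as in the $\Oh(n)$-size grammar for $\Bin_N$ itself, and no $\Omega(m)$ lower bound could hold.)
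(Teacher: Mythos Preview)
Your proof is correct and follows essentially the same approach as the paper's: identify $m=\min\{i,2N-i+1\}$ pairwise distinct $\$$-delimited blocks in $R_{N,i}$, associate to each block a nonterminal whose production splits it, and use the $\$$ delimiters to argue these nonterminals are distinct. The only cosmetic difference is that you work explicitly in the derivation tree (lowest node $u_t$ whose leaf-interval covers $I_t$, then take its label $A_t$), whereas the paper phrases the same choice abstractly (``the minimal nonterminal whose expansion contains $\$b\$$ but neither child's does''); your overlap argument for the contradiction is equivalent to the paper's suffix/prefix argument.
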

\begin{proof}
Observe that the string $R_{N,i}$ contains substrings of the form $\$ b \$$ where $b$ is the $\log N$-bit binary encoding of a number $z_b$.
Specifically, $R_{N,i}$ contains such substrings for $\min\{i,2N-i+1\}$ distinct values of $z_b$.

Let $\G$ be a grammar deriving a superstring of $R_{N,i}$. 
For every $z_b$ such that $\$ b \$ $ is a substring of $R_{N,i}$, denote as $A(z_b)$ the minimal nonterminal of $\G$ that expands to a superstring of $\$ b\$$.
That is, $A(z_b)$ is a nonterminal expanding to a string containing $\$ b\$$ as a substring, with rule $\rho(A(z_b))= B \cdot C$ such that neither $B$ nor $C$ contains $\$ b\$$ as a substring.

We claim that for every two distinct $z_b,z_{b'}$, it holds that $A(z_b) \neq A(z_{b'})$.
Indeed, $\rho(A(z_b))= B \cdot C$ such that $\exp(B)$ has a suffix of the form $\$b_1$ and $\exp(C)$ has a prefix of the form $b_2\$$ satisfying $b = b_1\cdot b_2$.
Similarly, $\rho(A(z_{b'}))= B'\cdot C'$ such that $B'$ has suffix $\$b'_1$ and $C'$ has prefix $b'_2\$$ satisfying $b' = b'_1\cdot b'_2$.
Since $b\neq b'$ and $\$$ does not occur in $b$ or $b'$, it cannot be the case that $A(z_b) = A(z_{b'})$. 
We have shown that $\G$ contains a distinct nonterminal for every one of the $\min\{i, 2N-i+1\}$ different values of $z_b$, which concludes the proof.
\end{proof}

Next, we show that a 2D grammar deriving a string $S$ implies a 1D grammar of similar size deriving the margins of $S$.
\begin{lemma}\label{obs:2dframe1d}
	Let $S$ be a 2D string and let $t$, $b$ (resp. $\ell$, $r$) be the top and bottom rows (resp. left and right columns) of $S$.
	Let $\G = (\nonterm, \start, \rho)$ be a 2D SLP deriving $S$.
There are 1D SLPs deriving $t$, $b$, $\ell$, and $r$, each of size at most $|\G|$.
\end{lemma}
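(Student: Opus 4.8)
The plan is to build, for each of the four borders $t,b,\ell,r$ of $S$ separately, a 1D SLP obtained by ``projecting'' $\G$ onto that border. Take the top row $t$. I would define, by recursion over $\nonterm$ along the acyclic order $\le$, a nonterminal $\phi(X)$ of the new grammar meant to expand to the top row of $\exp_\G(X)$: if $\rho(X)=\sigma\in\Sigma$, let $\phi(X)$ be a fresh nonterminal with production $\sigma$; if $\rho(X)=A\ohrz B$ (the concatenation runs \emph{parallel} to the top border, so the top row of $\exp_\G(X)$ splits as the top row of $\exp_\G(A)$ followed by that of $\exp_\G(B)$), let $\phi(X)$ be a fresh nonterminal with production $\phi(A)\ohrz\phi(B)$; and if $\rho(X)=A\ovrt B$ (the concatenation runs \emph{perpendicular}, so the top row of $\exp_\G(X)$ is exactly the top row of $\exp_\G(A)$), set $\phi(X):=\phi(A)$, creating nothing. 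The new grammar has nonterminals $\{\phi(X):X\in\nonterm\}$, starting nonterminal $\phi(\start)$, and the productions just described; it is acyclic because the production of $\phi(X)$ only mentions $\phi(A),\phi(B)$ for children of $X$, which are created earlier, and $\phi(\start)$ is genuinely one of its nonterminals since, following the perpendicular short-circuits down from $\start$, every $\phi(X)$ is ultimately a freshly created nonterminal (the base case being $\rho(X)=\sigma$). A straightforward structural induction then gives $\exp(\phi(X))=$ top row of $\exp_\G(X)$, hence $\exp(\phi(\start))=t$.

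The remaining three borders follow the same recipe with the obvious adjustments. For the bottom row, a horizontal concatenation is again parallel, while a vertical concatenation $A\ovrt B$ is perpendicular and one keeps the \emph{lower} part, $\phi(X):=\phi(B)$. For the left column, which I view as the $1\times\h(S)$ string read from top to bottom, the vertical concatenation of $\G$ is realized as a horizontal concatenation in the projection: a vertical $A\ovrt B$ is parallel and yields a fresh $\phi(X)$ with production $\phi(A)\ohrz\phi(B)$, whereas a horizontal $A\ohrz B$ is perpendicular and one keeps the \emph{left} part, $\phi(X):=\phi(A)$. The right column is symmetric to the left, keeping the right part on a horizontal concatenation.

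For the size bound, observe that each projected grammar contains exactly one production per \emph{freshly created} nonterminal, i.e.\ per $X\in\nonterm$ whose production is a terminal or a concatenation in the parallel direction, and the right-hand side of that production has $1$ symbol in the terminal case and $2$ in the parallel-concatenation case --- in both cases no more than the number of symbols on the right-hand side of $\rho(X)$. Summing over these $X$, and noting that nonterminals carrying a perpendicular concatenation contribute nothing new, the total size is at most $|\G|$; the same holds for each of the four projections. (This sidesteps any normalization issue: the short-circuiting never introduces unit productions.)

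I do not expect a genuine obstacle here --- the argument is essentially bookkeeping. The two points that need care are (a) correctly identifying, for a concatenation running perpendicular to the border in question, which of the two children carries that border, and (b) for the two vertical borders, consistently treating them as $1\times\h(S)$ strings, so that the vertical concatenation of $\G$ becomes a horizontal concatenation in the projected 1D SLP rather than accidentally producing a ``column-shaped'' grammar that is not the object the lemma asks for.
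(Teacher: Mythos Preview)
Your proposal is correct and follows essentially the same approach as the paper: projecting each nonterminal to the relevant border, keeping both children for concatenations parallel to the border and only the border-side child for perpendicular ones. The only cosmetic difference is that the paper keeps a nonterminal $X'$ for every $X$ and introduces a unit production $\rho'(X')\coloneqq A'$ in the perpendicular case (relying on the normalization convention), whereas you short-circuit by setting $\phi(X)\coloneqq\phi(A)$ directly; both yield size at most $|\G|$.
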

\begin{proof}
    We show how to construct a grammar for $t$, the remaining cases are symmetric.
    
    We create 1D SLP $\G'$ with nonterminals $\nonterm' = \{A' : A \in \nonterm \}$.
	For every nonterminal $A \in \nonterm$, if $\rho(A)= B \ovrt C$, we set $\rho(A') \coloneqq B'$ (i.e., keeping just the top part of $\exp(A)$).
    If $\rho(A) = B \ohrz C$, we set $\rho(A') \coloneqq B' \ohrz C'$.
    If $\rho(A)=\sigma$ for some terminal $\sigma$, we set $\rho(A') \coloneqq \sigma$.
    
    It is clear that $|\G'| \le |\G|$, and it can be verified (for example, by induction) that $\G'$ derives the top row of $S$, as required.
\end{proof}

Let us define the main gadget used for proving \cref{lem:lowerbound}.

\begin{definition}[See \cref{fig:gadgetexample}]
    For every $N,M$, we define a 2D string $C_{N,M}$ with dimensions $N \times M$.
    Choose the largest $M'=2^{n}$ such that $M' (\log M' + 2) \le M/2$ and let $B=\ShiftBin_{M'}$.
    $C_{N,M}$ contains two vertical blocks: 
    The left block is formed by $\lfloor \frac{N}{2M'} \rfloor$ vertically concatenated copies of $B$.
    The right block is formed by $\lfloor \frac{N - M'}{2M'} \rfloor$ vertically concatenated copies of $B$, shifted vertically by $M'$ positions.  
    Each block is padded with $0$ symbols so that its total height equals $N$.  
    Finally, to the right of the blocks there is a padding of $0$ symbols to complete the width to $M$.
\end{definition}

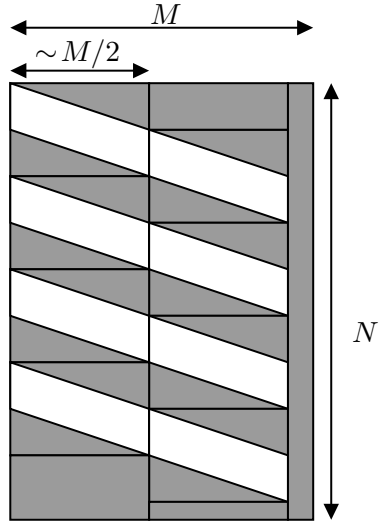
\begin{figure}[ht] 
    \centering
     \tikzset{every picture/.style={line width=0.75pt}} 

\begin{tikzpicture}[x=0.75pt,y=0.75pt,yscale=-1,xscale=1,scale=0.9]

\foreach \i in {0,...,3} {
    \foreach \j in {0,1}{
 
    \begin{scope}[shift={( 77* \j, \i*52 + 26* \j)}]

\draw [fill={rgb, 255:red, 155; green, 155; blue, 155 }  ,fill opacity=1 ]    (157,81) -- (234,81) -- (234,133) -- (157,133) -- cycle ;
\draw [fill={rgb, 255:red, 255; green, 255; blue, 255 }  ,fill opacity=1 ]    (157,81) -- (234,107) -- (234,133) -- (157,107) -- cycle ;

\end{scope}
}
}

\draw [fill={rgb, 255:red, 155; green, 155; blue, 155 }  ,fill opacity=1 ] (234,81)--(311,81)--(311,107)--(234,107)--cycle;
\draw [fill={rgb, 255:red, 155; green, 155; blue, 155 }  ,fill opacity=1 ] (157,289)--(234,289)--(234,325)--(157,325)--cycle;
\draw [fill={rgb, 255:red, 155; green, 155; blue, 155 }  ,fill opacity=1 ] (234,315)--(311,315)--(311,325)--(234,325)--cycle;
\draw [fill={rgb, 255:red, 155; green, 155; blue, 155 }  ,fill opacity=1 ] (311,81)--(325,81)--(325,325)--(311,325)--cycle;

\draw    (160,50) -- (323,50) ;
\draw [shift={(325,50)}, rotate = 180] [fill={rgb, 255:red, 0; green, 0; blue, 0 }  ][line width=0.08]  [draw opacity=0] (8.93,-4.29) -- (0,0) -- (8.93,4.29) -- cycle    ;
\draw [shift={(157,50)}, rotate = 0] [fill={rgb, 255:red, 0; green, 0; blue, 0 }  ][line width=0.08]  [draw opacity=0] (8.93,-4.29) -- (0,0) -- (8.93,4.29) -- cycle    ;

\draw    (159,74) -- (232,74) ;
\draw [shift={(234,74)}, rotate = 180] [fill={rgb, 255:red, 0; green, 0; blue, 0 }  ][line width=0.08]  [draw opacity=0] (8.93,-4.29) -- (0,0) -- (8.93,4.29) -- cycle    ;
\draw [shift={(157,74)}, rotate = 0] [fill={rgb, 255:red, 0; green, 0; blue, 0 }  ][line width=0.08]  [draw opacity=0] (8.93,-4.29) -- (0,0) -- (8.93,4.29) -- cycle    ;

\draw    (335,83) -- (335,323) ;
\draw [shift={(335,325)}, rotate = 270] [fill={rgb, 255:red, 0; green, 0; blue, 0 }  ][line width=0.08]  [draw opacity=0] (8.93,-4.29) -- (0,0) -- (8.93,4.29) -- cycle    ;
\draw [shift={(335,81)}, rotate = 90] [fill={rgb, 255:red, 0; green, 0; blue, 0 }  ][line width=0.08]  [draw opacity=0] (8.93,-4.29) -- (0,0) -- (8.93,4.29) -- cycle    ;

\draw (169,55) node [anchor=north west][inner sep=0.75pt]   [align=left] {$\displaystyle \sim\!M/2$};
\draw (233,35) node [anchor=north west][inner sep=0.75pt]   [align=left] {$\displaystyle M$};
\draw (345,213) node [anchor=north west][inner sep=0.75pt]   [align=left] {$\displaystyle N$};

\end{tikzpicture}
        \caption{A demonstration of $C_{N,M}$.
         Each rectangle with a diagonal pattern is a copy of $\ShiftBin_{M'}$ (the diagonal pattern corresponds to the shifted copies of $\Bin_{M'}$ within $\ShiftBin_{M'}$).
         The gray areas are the padding $0$ used to ``shift'' the second vertical block, and the extra vertical padding at the bottom to complement the height of each vertical block to $N$.
        }
        \label{fig:gadgetexample}
\end{figure}

Recall that the width of $\ShiftBin_{M'}$ is $M' (\log M' + 2)$.
Therefore, the width of each vertical block is $M' (\log M' + 2) \le M/2$, so their total width never exceeds $M$.
Also recall that the height of $\ShiftBin_{M'}$ is $2M'$, so each block contains the maximal number of vertically concatenated copies of $\ShiftBin_{M'}$ that fit within height $N$ (for the right block, this takes into account the initial vertical shift of $M'$ positions).

We proceed to show that there is a logarithmic sized 2D SLP deriving $C_{N,M}$.
\begin{lemma}\label{obs:gadgetsconstruction}
    For every $N,M$, there is a 2D SLP of size $\Oh(\log N + \log M)$ deriving $C_{N,M}$.
\end{lemma}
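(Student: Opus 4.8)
The plan is to build $C_{N,M}$ by gluing together a constant number of pieces, each derivable by a small 2D SLP: vertically stacked copies of $B=\ShiftBin_{M'}$, together with all-zero rectangles used as horizontal and vertical padding. The preceding lemma already supplies a 2D SLP of size $\Oh(\log M')=\Oh(\log M)$ deriving $B$. We also record the folklore fact — obtained by exactly the doubling technique used above for the nonterminals $Z_i$ and $0_i$ — that for any $a,b\ge 1$ the all-zero string $\mathbf{0}_{a\times b}\in\Sigma^{a\times b}$ is derived by a 2D SLP of size $\Oh(\log a+\log b)$: build $1\times 2^j$ zero rows for $j\le\log b$ by horizontal doubling, concatenate $\Oh(\log b)$ of them through \cref{lem:gadget} to obtain a $1\times b$ zero row, then double it vertically and concatenate $\Oh(\log a)$ copies to reach height $a$.

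First I would produce, for any integer $k\ge 1$, a nonterminal $B^{(k)}$ with $\exp(B^{(k)})$ equal to the vertical concatenation of $k$ copies of $B$: set $P_0\coloneqq B$ and $P_{i+1}\coloneqq P_i\ovrt P_i$, so that $P_i$ expands to $2^i$ stacked copies of $B$, and combine the $P_i$ with $i$ ranging over the set bits of $k$ via \cref{lem:gadget}. This costs $\Oh(\log k)$ nonterminals, and $\exp(B^{(k)})$ has height $2M'k$ and width $w\coloneqq\w(B)=M'(\log M'+2)$. In our application $k\le N/(2M')\le N$, so $\Oh(\log N)$ nonterminals suffice for each such gadget.

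Next I would assemble the two vertical blocks, each of dimensions $N\times w$. For the left block, put $k\coloneqq\lfloor N/(2M')\rfloor$, take $B^{(k)}$ (of height $2M'k\le N$), and, unless $2M'k=N$, let the block be $B^{(k)}\ovrt\mathbf{0}_{(N-2M'k)\times w}$; call the result $L$. For the right block, put $k'\coloneqq\lfloor(N-M')/(2M')\rfloor$, form $\mathbf{0}_{M'\times w}\ovrt B^{(k')}$ — where the leading zero rectangle realizes the prescribed downward shift by $M'$ — and, if $M'+2M'k'<N$, append $\mathbf{0}_{(N-M'-2M'k')\times w}$ at the bottom to bring the height to $N$; call the result $R$. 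Both $L$ and $R$ use $\Oh(\log N+\log M)$ nonterminals.

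Finally I would let the starting nonterminal derive $(L\ohrz R)\ohrz\mathbf{0}_{N\times(M-2w)}$, dropping the last factor when $2w=M$; the choice of $M'$ gives $w\le M/2$, hence $2w\le M$, so all concatenations respect the dimension constraints and the expansion equals $C_{N,M}$. The total count is $\Oh(\log M)$ for $B$, $\Oh(\log N)$ for the two stacking gadgets, and $\Oh(\log N+\log M)$ for the zero-rectangle paddings, i.e.\ $\Oh(\log N+\log M)$ overall. I do not expect a genuine obstacle here: the only thing needing care is the arithmetic of the dimensions and the degenerate cases — $k=0$ or $k'=0$ (a block is then just $\mathbf{0}_{N\times w}$), and paddings of height or width $0$ (simply omitted) — which is routine bookkeeping.
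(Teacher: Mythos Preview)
Your proof is correct and follows essentially the same approach as the paper's: build $B$ via the preceding lemma, stack copies of $B$ by doubling and binary decomposition, and fill in the remaining regions with all-zero rectangles produced by the same doubling trick. The paper's own proof is a terse three-sentence sketch of exactly this construction, whereas you spell out the dimensions and degenerate cases explicitly.
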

\begin{proof}
  Each of the two blocks of $C_{N,M}$ can be produced by starting with a grammar of size $\Oh(\log M)$ that derives $B$, and adding $\Oh(\log (N/M))$ nonterminals that produce concatenations of exponentially increasing amounts of $B$.
    Then, to get a block of $C_{N,M}$ we can concatenate $\Oh(\log (N/M))$ such nonterminals.
    The $0$ paddings can be obtained using additional $\Oh(\log N + \log M)$ nonterminals.
\end{proof}

In our lower bound construction, we are interested in constructing a grammar with nonterminals producing a sequence of $k+1$ gadgets of the form 
$$C_{N,M},C_{N+b,M},C_{N+2b,M},\ldots, C_{N+k\cdot b,M}$$
for some $b$ that is an integer multiple of $M'$.
In the following lemma, we show that nonterminals producing such sequence can be constructed using a small amount of additional nonterminals.
\begin{lemma}\label{lem:genseqgad}
Let $N, M \in \N$ and choose the largest $M'=2^{n}$ such that $M'(\log M' + 2) \le M/2$.
Let $b\le N$ be some integer multiple of $M'$ and let $k\in \N$.
There is a grammar $\G$ satisfying the following.
\begin{enumerate}
    \item For every $i \in [0 \dd k]$, $\G$ contains a nonterminal $C_i$ that expands to $C_{N + i \cdot b, M}$.
    \item $\G$ contains $\Oh(\log N + \log M + k)$ nonterminals in total.
\end{enumerate} 
\end{lemma}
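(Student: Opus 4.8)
The plan is to assemble a pool of $\Oh(\log N + \log M)$ ``primitive'' nonterminals shared by all the $C_i$, and then, for $i = 1, \dots, k$, obtain $C_i$ from $C_{i-1}$ while spending only $\Oh(1)$ fresh nonterminals; summing over $i \in [0 \dd k]$ yields $\Oh(\log N + \log M + k)$. Write $H \coloneqq 2M'$ for the height and $W \coloneqq M'(\log M' + 2) \le M/2$ for the width of $B \coloneqq \ShiftBin_{M'}$, put $N_i \coloneqq N + i\cdot b$, and write $b = M'\cdot m$. Reading off the definition, $C_{N_i,M}$ is the horizontal concatenation $L_i \ohrz R_i \ohrz Q_i$ of three width-disjoint parts: a \emph{left block} $L_i$ of width $W$ consisting of $p_i \coloneqq \lfloor N_i/H \rfloor$ stacked copies of $B$ followed by an all-$0$ block of height $\pi_i \coloneqq N_i \bmod H$; a \emph{right block} $R_i$ of width $W$ consisting of an all-$0$ block of height $M'$, then $q_i \coloneqq \lfloor (N_i - M')/H \rfloor$ stacked copies of $B$, then an all-$0$ block of height $\pi'_i \coloneqq (N_i - M') \bmod H$; and a \emph{right padding} $Q_i$, an all-$0$ block of dimensions $N_i \times (M - 2W)$ (absent when $M = 2W$).

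The crux --- and essentially the only delicate point --- is that the hypothesis $M' \mid b$ makes the ``phase'' of every block eventually $2$-periodic in $i$. Since $b = M'm \equiv M'(m \bmod 2) \pmod{H}$, we get $N_i \equiv N + i\cdot M'(m \bmod 2) \pmod{H}$: if $m$ is even then $\pi_i = N \bmod H$ for all $i$, while if $m$ is odd then $\pi_i$ alternates between $N \bmod H$ and $(N + M') \bmod H$; in either case $\pi_i$ takes at most two values, and likewise $\pi'_i$. Moreover $p_{i+1} - p_i = \lfloor (\pi_i + b)/H \rfloor$, which equals $m/2$ when $m$ is even and alternates between $(m-1)/2$ and $(m+1)/2$ (according to whether $\pi_i < M'$) when $m$ is odd, and the same holds for $q_{i+1} - q_i$. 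This $2$-periodicity is precisely what allows an amortized $\Oh(1)$ cost per index: if $b$ were an arbitrary integer, $\pi_i$ could cycle through $\Theta(M')$ distinct heights and each $C_i$ would seem to require $\Omega(\log M)$ new nonterminals.

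For the primitive pool I would (i) take the $\Oh(\log M')$-size grammar for $B = \ShiftBin_{M'}$ constructed above; (ii) by repeated vertical doubling add nonterminals for $2^j$ stacked copies of $B$ for all $j \le \log(\max(p_0, m)) = \Oh(\log N)$, and combine $\Oh(\log N)$ of these by binary expansions to form $\mathrm{Core}_0$ ($= p_0$ copies of $B$), $\mathrm{Core}'_0$ ($= q_0$ copies), and $E_0, E_1$ ($= \lfloor m/2 \rfloor$ resp.\ $\lceil m/2 \rceil$ copies of $B$); (iii) build the at most five fixed all-$0$ rectangles that can occur as paddings --- the $\le 2$ of shape $\pi\times W$, the $\le 2$ of shape $\pi'\times W$, and the one of shape $M'\times W$ --- each with $\Oh(\log M' + \log W) = \Oh(\log M)$ nonterminals; and (iv) build a $1 \times (M - 2W)$ row of $0$'s ($\Oh(\log M)$ nonterminals) together with its vertical powers up to height $N$ ($\Oh(\log N)$ more), and from these form $Q_0$ (shape $N \times (M - 2W)$) and $Z_b$ (shape $b \times (M - 2W)$). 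This is $\Oh(\log N + \log M)$ nonterminals in total; any empty rectangle (when one of $\pi_i, \pi'_i, p_0, q_0, M - 2W$ is $0$) is just omitted.

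Finally I would run the induction on $i$: keeping $\mathrm{Core}_i, \mathrm{Core}'_i, Q_i$, advance by the single productions $\mathrm{Core}_{i+1} \coloneqq \mathrm{Core}_i \ovrt E_\epsilon$, $\mathrm{Core}'_{i+1} \coloneqq \mathrm{Core}'_i \ovrt E_{\epsilon'}$, and $Q_{i+1} \coloneqq Q_i \ovrt Z_b$, where $\epsilon, \epsilon' \in \{0,1\}$ are determined by the parity pattern of the second paragraph so that all dimensions match; then form $L_i \coloneqq \mathrm{Core}_i \ovrt P$, $R_i \coloneqq P_{M'} \ovrt \mathrm{Core}'_i \ovrt P'$, and $C_i \coloneqq L_i \ohrz R_i \ohrz Q_i$, using \cref{lem:gadget} for the three-term concatenations, where $P, P'$ are the precomputed padding rectangles selected for this $i$. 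Each step adds $\Oh(1)$ nonterminals, and a routine induction checking the displayed dimension formulas gives $\exp(C_i) = C_{N_i, M}$. Hence $\G$ has $\Oh(\log N + \log M) + \Oh(k) = \Oh(\log N + \log M + k)$ nonterminals. (If $b = 0$ all $C_i$ coincide and \cref{obs:gadgetsconstruction} already suffices; if $N < 2M'$ the gadget degenerates and is built directly, so the interesting regime, assumed above, is $N \ge 2M'$.)
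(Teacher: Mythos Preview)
Your proposal is correct and follows essentially the same approach as the paper: build an $\Oh(\log N + \log M)$ pool of shared primitives (the $\ShiftBin_{M'}$ grammar, its vertical powers, and the handful of all-$0$ padding blocks), then obtain each $C_{i}$ from $C_{i-1}$ with $\Oh(1)$ fresh nonterminals by exploiting that the bottom-padding heights are periodic in $i$.

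The only organizational difference is how this periodicity is handled. The paper observes that $2b$ is a multiple of $H=2M'$, so the paddings $P^1,P^2$ are outright constant along the even (respectively odd) subsequence of indices; it therefore runs the increment argument twice, once for even $i$ and once for odd $i$, appending a fixed block of $b/M'$ copies of $\ShiftBin_{M'}$ at each step. You instead keep a single sequence and track the $2$-periodicity of $\pi_i,\pi'_i$ and of the increment $p_{i+1}-p_i\in\{\lfloor m/2\rfloor,\lceil m/2\rceil\}$ explicitly, selecting the right $E_\epsilon$ and padding at each step. Both are equally valid; the paper's split avoids the case analysis on the parity of $m$, while your version avoids doubling the argument. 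Your remark that this is exactly where the hypothesis $M'\mid b$ is used (otherwise $\pi_i$ could take $\Theta(M')$ values) is a nice point the paper leaves implicit.
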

\begin{proof}
    We will show how to construct a grammar with the required amount of nonterminals, containing $C_i$ for even values of $i$.
    A similar grammar can be constructed for odd values of $i$, and the union of the two grammars satisfies the required conditions.
    For an integer $N'$, we denote the following substrings of $C_{N',M}$:
    \begin{enumerate}
    \item $B^1_{N'}$ is the left vertical block of $C_{N',M}$, \textit{without} the vertical padding to complement the height to $N$.
    \item $B^2_{N'}$ is defined identically to $B^1_{N'}$, but for the right block of $C_{N',M}$.
    \item $B^3_{N'}$ is the horizontal padding to complement the width of $C_{N',M}$ to $M$.
    \item $P^1_{N'}$ is the vertical padding at the bottom of the left vertical block.
    \item $P^2_{N'}$ is the vertical padding at the bottom of the right vertical block.
    \end{enumerate}

    We observe that $P^1_N$ and $P^2_N$ are all-$0$'s strings, and their dimensions depend only on the remainder of $N$ modulo $2M'$. 
    Therefore, $P^{1}_N = P^1_{N+2b}$ and $P^{2}_N = P^2_{N + 2b}$.
    We also observe that $B^1_{N + 2b}$ is obtained from $B^1_{N}$ by appending $b/M'$ copies of $\ShiftBin_{M'}$ to the bottom of $B^1_{N}$, and the same is true for $B^2_{N}$ and $B^2_{N + 2b}$.
    Finally, $B^3_{N + 2b}$ is obtained from $B^3_N$ by appending a block of $0$'s with height $2b$ and appropriate width.

    Let $\G_{N,M}$ be the grammar deriving $C_{N,M}$ as described in \cref{obs:gadgetsconstruction}.
    Recall that the grammar $\G_{N,M}$ has nonterminals generating each of $B^1_N$, $B^2_N$, $B^3_N$, $P^1_N$ and $P^2_N$.
    We extend $\G_{N,M}$ to include the following nonterminals:
    \begin{enumerate}
        \item $V'_1$: a vertical concatenation of $b/M'$ copies of $\ShiftBin_{M'}$.
        \item $V'_2$: an all-$0$'s block with height $2b$ and width $w(B^3_N)$.
    \end{enumerate}
    Observe that both $V'_1$ and $V'_2$ can be added to the grammar using $\Oh(\log N + \log M)$ additional nonterminals.
    It follows from the above discussion that if we have a grammar with nonterminals expanding to $B^1_{N'}$, $B^2_{N'}$, $B^3_{N'}$, $P^1_{N'}$ and $P^2_{N'}$, we can add $\Oh(1)$ nonterminals to obtain nonterminals expanding to $B^1_{N' + 2b}$, $B^2_{N' + 2b}$, $B^3_{N' + 2b }$, $P^1_{N' + 2b}$, and $P^2_{N' + 2b}$.
    Then, a nonterminal $C_{N' + 2b}$ can be added by adding additional $\Oh(1)$ nonterminals concatenating the above appropriately.
    By repeating this process $\lfloor k/2 \rfloor$ times, we obtain a grammar containing $C_{N + i \cdot b, M}$ for every even $i$ in $[0\dd k]$.
    The total number of nonterminals is $\Oh(\log N + \log M + k)$, as required.
\end{proof}

We are now ready to prove \Cref{lem:lowerbound}.

\begin{proof}[Proof of \cref{lem:lowerbound}]
    Let us start by providing some intuition.
    We will construct a string $S$ that is composed of instances of $C_{N',M'}$ with $M'\in \Theta(\frac{N}{\log N})$.
    The instances are nested in a spiral manner with decreasing lengths (see \cref{fig:lb}).
    The 'depth' of this spiral (i.e., number of swirls  before the center is reached) will be roughly $c\log N$.
    Since every 'step' in this pattern is a $C_{N',M'}$ gadget with $N',M' \in \Oh(N)$, a grammar $\G$ deriving $S$ can be constructed using $\Oh(\log^2N)$ nonterminals.

    To understand the intuition behind this construction, think of a 2D SLP $\G'$ deriving $S$ with a small number of nonterminals (i. e. , $|\G'| \in o(\frac{N}{\log^2 N})$ ).
    Think of a path $P$ in the derivation tree of $\G'$ from the root to some index at the center of $S$.
    It is helpful to think of $P$ as a sequence of nested rectangles in $S$, all containing the center point of $S$.
    Intuitively, each of the $C_{N',M'}$ gadgets limits the structure of these rectangles.
    Namely, these rectangles cannot horizontally intersect a $C_{N',M'}$ gadget (except possibly very close to the top/bottom of said gadget), as this would imply that the top/bottom of a rectangle contains some row of a $\ShiftBin$ gadget as a substring.
    By \cref{lem:shiftbincut} and \cref{obs:2dframe1d}, a nonterminal that expands to a string with such a top/bottom row requires a large number of nonterminals, contradicting the small size of $\G'$.
    It follows that the rectangles of $P$ have to roughly comply with the structure of the spiral.
    This allows us to lower bound $|P|$, and thus the depth of $\G'$.

    We proceed to define $\G$ and $S$.
    Let $\Delta'=\frac{N}{8c\log N}$ and $\lambda = 2c \log N $.
    Let $M'$ be the largest power of $2$ such that $M'(\log M'+2) \le \Delta'/2$.
    Let $\Delta$ be the largest integer smaller than $\Delta'$ that is an integer multiple of $M'$.
    Notice that we must have $\Delta \in [\Delta'/2 \dd \Delta']$ since $\Delta' \ge 2M'$ and $\Delta \ge \Delta' - M'$. 
    Also observe that $M''$, the largest power of two such that $M''(\log M'' +2) \le \Delta$, must be exactly $M''=M'$ (otherwise, $2M'(\log M' +2)$ is an integer multiple of $M'$ strictly between $\Delta'$ and $\Delta$).

    For an integer $N'\in \N$, let $\tilde{C}_{N',\Delta}$ denote the string $C_{N',\Delta}$ rotated clockwise by $90$ degrees. 
    We denote as $G(N')$ and $G(N')$ nonterminals that expand to $C_{N',\Delta}$ and $\tilde{C}_{N',\Delta}$, respectively.

    We define the SLP $\G=(\nonterm,F^0_0,\rho)$ as follows.
    For every $i\in [0 \dd \lambda -1]$ and $x\in \{0,1,2,3\}$ there is a nonterminal $F^x_i$ in $\nonterm$ with the following derivation rules (see \cref{fig:lb} for a demonstration):
    \begin{enumerate}
        \item $\rho(F^0_i)   \coloneqq F^1_i \ohrz G(N - 2 i\cdot \Delta)$ for every $i\in [0 \dd \lambda-1]$,
        \item $\rho(F^1_i)   \coloneqq F^2_i \ovrt \tilde{G}(N - (2i+1)\cdot \Delta)$ for every $i\in [0 \dd \lambda-1]$,
        \item $\rho(F^2_{i}) \coloneqq G(N - (2i+1)\cdot \Delta) \ohrz F^3_{i}$ for every $i\in [0 \dd \lambda-1]$,
        \item $\rho(F^3_i)   \coloneqq \tilde{G}(N - (2i+2)\cdot \Delta) \ovrt F^0_{i+1}$ for every $i\in [0 \dd \lambda-2]$.
    \end{enumerate}

    \begin{figure}[ht]
  \centering
  \tikzset{every picture/.style={line width=0.75pt}} 

\begin{tikzpicture}[x=0.75pt,y=0.75pt,yscale=-1,xscale=1,scale=0.8]
	\usetikzlibrary{fit}
	\draw (161,26) -- (461,26) -- (461,326) -- (161,326) -- cycle;
	\draw (411,26) -- (461,26) -- (461,326) -- (411,326) -- cycle;
	\draw (161,276) -- (411,276) -- (411,326) -- (161,326) -- cycle;
	\draw (161,26) -- (211,26) -- (211,276) -- (161,276) -- cycle;
	\draw (211,26) -- (411,26) -- (411,76) -- (211,76) -- cycle;
	\draw (361,76) -- (411,76) -- (411,276) -- (361,276) -- cycle;
	\draw (211,76) -- (261,76) -- (261,226) -- (211,226) -- cycle;
	\draw (211,226) -- (361,226) -- (361,276) -- (211,276) -- cycle;
	\draw (261,76) -- (361,76) -- (361,126) -- (261,126) -- cycle;
	\draw  [fill={rgb, 255:red, 211; green, 211; blue, 211 }, fill opacity=1] (261,126) -- (361,126) -- (361,226) -- (261,226) -- cycle ;

	\node[rotate=90] at (436,176) {$C_{N,\Delta}$}; 
	\node[]          at (286,301) {$\tilde{C}_{ N-\Delta,\Delta}$};  
	\node[rotate=90] at (186,151) {$C_{N-\Delta,\Delta}$};  
	\node[]          at (311, 51) {$\tilde{C}_{N-2\Delta,\Delta}$};   
\end{tikzpicture}
  \caption{A demonstration of the string $S$ obtained by the described grammar.
		The grammar composes rotated copies of $C_{N,M}$ in a spiral pattern.
        The narrow dimension of each gadget is always $\Delta = \frac{N}{8c\log N}$, and the wide dimension of the gadgets decreases as the spiral swirls inwards.
        }
  \label{fig:lb}
\end{figure}
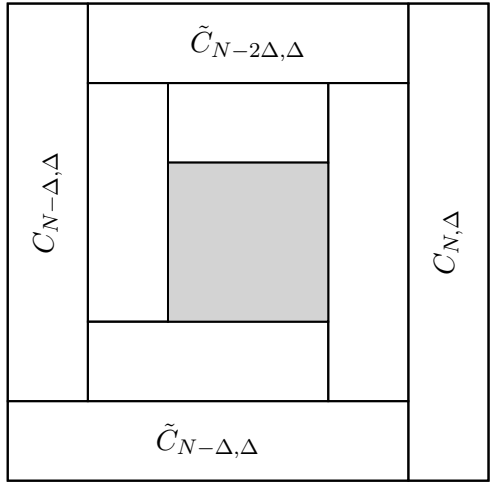

	Finally, $F^3_{\lambda-1}$ expands to a string of size $N-4\lambda\Delta \times N-4\lambda \Delta$ filled with $0$'s (notice that $N - 4\lambda\Delta \ge N-4\lambda\Delta' = N/2$).
	The starting nonterminal of $\G$ is $F^0_0$, which expands to the string $S$.
    In words, the nonterminals alternate between expanding to a $C_{N',\Delta}$ gadget, and expanding to a rotated $C_{N',\Delta}$ gadget.
    The width (or height, if rotated) of the gadgets is always $\Delta$, and the height $N'$ (if rotated, width) decreases to fit the remaining height/width of $S$.

	It follows from \cref{lem:genseqgad} that all $G$ and $\tilde{G}$ nonterminals used in the above can be created using $\Oh(\log N + \log \Delta + \lambda) = \Oh(\log N)$ nonterminals.
    Clearly, the all-$0$'s string of $F^3_{\lambda-1}$ can also be produced using $\Oh(\log N)$ nonterminals.
    The total number of nonterminals in the grammar is therefore $n = \Oh(\log N)$.
    For every $i\in [0 \dd \lambda-1]$, we call the string produced by $F^0_i$ the $i$-th layer of $S$, denoted by $L_i$.
    In terms of indices in $S$, the $i$-th layer of $S$ is defined as $L_i = S(i\Delta \dd N-i\Delta](i\Delta \dd N-i\Delta]$.
    The $\Delta$ right columns of $L_i$ are an occurrence of $C_{N',\Delta}$ with $N'=N-2i\Delta$ being the height of $L_i$.

    We proceed to prove that no grammar $\G'$ with $o(N/\log^2 N)$ nonterminals and depth less than $c \log N$ can derive $S$.
    Assume to the contrary that such a grammar $\G'=(\nonterm',\start',\rho')$ exists.
	For any nonterminal $X\in \nonterm'$ and $i\in [\lambda]$, we say that the nonterminal $X$ contains $L_i$ if it expands to $S[x \dd x'][y \dd y']$ such that $(i\Delta \dd N-i\Delta] \subseteq [x \dd x']\cap [y \dd y']$, i.e., $\exp_{\G'}(X)$ is a superstring of $L_{i}$.
    Let $P = V_1,V_2,\ldots V_{d}$ be the sequence of nonterminals obtained by traversing the derivation tree $\T_{\slp}$ from the root to the leaf corresponding to position $(N/2,N/2)$.
	Notice that position $(N/2,N/2)$ is contained in all $\lambda$ layers of $S$. 
    For every $i\in [\lambda/2]$, let $X_i$ be the last nonterminal in $P$ whose expansion contains $L_{2i}$.
    Since $V_1$ expands to $S$, which contains all layers of $S$, all $X_i$'s are well defined.
    We show that for every $i\in [\lambda/2 -1]$, $X_{i}$ strictly precedes $X_{i+1}$ in $P$, which implies that $|P| \ge \lambda/2 =  c \log N$, contradicting $d < c \log N$.
    \begin{claim}
    For every $i\in [\lambda/2 -1]$,  $X_i$ strictly precedes $X_{i+1}$ in $P$.
    \end{claim}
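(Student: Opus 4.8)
The plan is to argue by contradiction, and the first step is a cheap reduction. Since $L_{2i+2}$ is literally a sub‑rectangle of $L_{2i}$, any nonterminal of $\G'$ whose expansion contains $L_{2i}$ also contains $L_{2i+2}$; in particular $X_i$ does, so $X_{i+1}$ (the last nonterminal of $P$ containing $L_{2i+2}$) cannot occur strictly above $X_i$. Hence it suffices to prove $X_i\neq X_{i+1}$.

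Assume toward a contradiction that $X_i=X_{i+1}=X$, and let $W$ be the child of $X$ on $P$. By the two maximality properties of $X$, the rectangle $\exp_{\G'}(W)$ contains neither $L_{2i}$ nor $L_{2i+2}$. Writing $\exp_{\G'}(X)=S[a\dd a'][b\dd b']$, the fact that $X$ contains $L_{2i}$ gives $a\le 2i\Delta+1$, $a'\ge N-2i\Delta$, $b\le 2i\Delta+1$, $b'\ge N-2i\Delta$, and these (weaker) inequalities also certify $L_{2i+2}$ on all four sides of $X$. Going from $X$ to $W$, exactly one side of the rectangle moves inward; by the essential four‑fold symmetry of the spiral I treat the case that it is the right side, so $\exp_{\G'}(W)=S[a\dd a'][b\dd m]$ for some $m<b'$. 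Combining three facts — $W$ lies on $P$, so $m\ge N/2$; $W$ does not contain $L_{2i}$, which (the other three sides being unchanged) forces $m<N-2i\Delta$; and $W$ does not contain $L_{2i+2}$, which, since the three unchanged sides still enclose the column interval $(2(i+1)\Delta\dd N-2(i+1)\Delta]$, forces $m<N-2(i+1)\Delta$ — we get $N/2\le m<N-2(i+1)\Delta$.

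The heart of the argument is then purely geometric. Using $2j\Delta\le\lambda\Delta\le N/4$ for $j\le\lambda$, the column index $m$ lies strictly between the left and the right vertical $C_{\cdot,\Delta}$‑gadget of both layers $L_{2i}$ and $L_{2i+1}$; together with $a\le 2i\Delta+1$ and $a'\ge N-2i\Delta$ (so that $\exp_{\G'}(W)$ spans every row of $L_{2i}$), this means the rightmost column $S[a\dd a'][m]$ of $\exp_{\G'}(W)$ passes, over its full height of $\Delta$ rows, through one of the horizontal strips of the spiral — a rotated gadget $\tilde C_{N'',\Delta}$ from the ring of $L_{2i}$ or of $L_{2i+1}$. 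A single column of $\tilde C_{N'',\Delta}$ is exactly a row of $C_{N'',\Delta}$, and such a row contains the $r_L$‑th row of a $\ShiftBin_{M'}$ copy inside its left block and the $r_R$‑th row of a $\ShiftBin_{M'}$ copy inside its right block, where $r_L$ and $r_R$ differ by exactly $M'$ modulo $2M'$ because the right block is shifted down by $M'$. Consequently at least one of $r_L,r_R$, say $r^\ast$, is \emph{central}: $\min\{r^\ast,2M'-r^\ast+1\}\ge M'/2$. (If $m$ happens to be so close to the boundary of $L_{2i+1}$ that this row falls in the $0$‑padding of a block, I would instead use the corresponding $\tilde C$‑strip of the outer layer $L_{2i}$, which only shifts the row index by $\Delta$ and places it safely away from the padding; the estimates $M',2M'\ll\Delta\le\Delta'=N/(8c\log N)$ and $2i\Delta,(2i+2)\Delta\le N/4$ make this work.) Hence the rightmost column of $\exp_{\G'}(W)$ contains a copy of $R_{M',r^\ast}$ as a substring. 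Applying \cref{obs:2dframe1d} to the sub‑SLP of $\G'$ rooted at $W$ (which derives $\exp_{\G'}(W)$ and has size at most $|\G'|$) yields a $1$D SLP of size at most $|\G'|$ deriving that column, hence a superstring of $R_{M',r^\ast}$; by \cref{lem:shiftbincut} this SLP has size at least $M'/2=\Omega(N/\log^2N)$, contradicting $|\G'|=o(N/\log^2N)$. The cases in which the cut is on the left, top or bottom side are symmetric, using a vertical $C_{\cdot,\Delta}$‑gadget (a left gadget of $L_{2i}$ or $L_{2i+1}$) in place of the rotated one for the top and bottom cases.

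The step I expect to be the main obstacle is exactly this geometric bookkeeping — pinning down which strip of which ring the cut crosses, and handling the corner case where the crossed row lands in a padding block, which is precisely where the freedom to retreat from $L_{2i+1}$ to the outer layer $L_{2i}$ is needed. Everything else (the reduction, the inequalities on $m$, and the final invocation of \cref{obs:2dframe1d,lem:shiftbincut}) is routine.
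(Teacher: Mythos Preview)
Your proposal is correct and follows essentially the same approach as the paper: assume $X_i=X_{i+1}$, observe that the production at this node cuts through $L_{2(i+1)}$, show the cut line crosses one of the $C_{\cdot,\Delta}$ gadgets at a row that is far from both ends (hence contains a central $\ShiftBin_{M'}$ row), and finish with \cref{obs:2dframe1d} and \cref{lem:shiftbincut}.

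The only noteworthy difference is in how you bound the cut position on the two sides. The paper argues that \emph{both} children $B',C'$ of $X'$ fail to contain $L_{2i+2}$ (the one off $P$ because it misses the central all-$0$ square), which yields $z\in((2i+2)\Delta\dd N-(2i+2)\Delta]$ and hence a $2\Delta$ buffer at each end of the right vertical gadget of $L_{2i}$. You instead work only with the on-path child $W$ and replace the second containment argument by the simpler observation that $W$ contains $(N/2,N/2)$, giving $m\ge N/2$; this produces an asymmetric window with buffer $>\Delta$ on one side and $>N/4$ on the other. Since $2M'<\Delta$, your buffer already suffices and your ``retreat to $L_{2i+1}$'' fallback is never actually needed---but the paper's symmetric $2\Delta$ buffer makes the padding verification marginally cleaner. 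Your choice of the horizontal-concatenation case (vertical cut crossing a rotated $\tilde C$ strip) rather than the paper's vertical-concatenation case is purely a matter of which WLOG you pick.
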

    \begin{proof}
    Since $L_{2i}$ contains $L_{2(i+1)}$ as a substring, it is clear that $X_i$ weakly precedes $X_{i+1}$ in $P$.
    Assume to the contrary that $X_i$ does not strictly precede $X_{i+1}$, which implies $X_i=X_{i+1}$.
    Let $X'=X_i=X_{i+1}$, and let the production rule of $X'$ be $\rho'(X')= B' \ovrt C'$. 
	Assume that the production rule of $X'$ is a vertical concatenation rule, as the other case (horizontal concatenation rule) is symmetric.
	Note that it is impossible for the production rule of $X'$ to be $\rho'(X')= \sigma$ for some terminal $\sigma$, as $\exp_{\G'}(X')$ contains $L_{2i}$.

    We will show that the bottom row of $B'$ contains a row of $L_{2i}$ that is not-too-close to the margins of $L_{2i}$ (See \cref{fig:lbproof}).
    Specifically, we will show that the bottom row of $B'$ contains the $\tilde{z}$-th row of $L_{2i}$ with $\tilde{z}$ being at distance at least $2 \Delta$ from the top and the bottom of $L_{2i}$.

    \begin{figure}[ht]
  \centering
  \tikzset{
pattern size/.store in=\mcSize, 
pattern size = 5pt,
pattern thickness/.store in=\mcThickness, 
pattern thickness = 0.3pt,
pattern radius/.store in=\mcRadius, 
pattern radius = 1pt}
\makeatletter
\pgfutil@ifundefined{pgf@pattern@name@_uvop7ga4w}{
\pgfdeclarepatternformonly[\mcThickness,\mcSize]{_uvop7ga4w}
{\pgfqpoint{0pt}{0pt}}
{\pgfpoint{\mcSize+\mcThickness}{\mcSize+\mcThickness}}
{\pgfpoint{\mcSize}{\mcSize}}
{
\pgfsetcolor{\tikz@pattern@color}
\pgfsetlinewidth{\mcThickness}
\pgfpathmoveto{\pgfqpoint{0pt}{0pt}}
\pgfpathlineto{\pgfpoint{\mcSize+\mcThickness}{\mcSize+\mcThickness}}
\pgfusepath{stroke}
}}

\tikzset{every picture/.style={line width=0.75pt}} 

\begin{tikzpicture}[x=0.75pt,y=0.75pt,yscale=-1,xscale=1,scale=0.8]

\draw   (161,102) -- (461,102) -- (461,402) -- (161,402) -- cycle ;
\draw  [fill={rgb, 255:red, 155; green, 155; blue, 155 }  ,fill opacity=1 ] (411,102) -- (461,102) -- (461,402) -- (411,402) -- cycle ;
\draw  [fill={rgb, 255:red, 155; green, 155; blue, 155 }  ,fill opacity=1 ] (161,352) -- (411,352) -- (411,402) -- (161,402) -- cycle ;
\draw  [fill={rgb, 255:red, 155; green, 155; blue, 155 }  ,fill opacity=1 ] (161,102) -- (211,102) -- (211,352) -- (161,352) -- cycle ;
\draw  [fill={rgb, 255:red, 155; green, 155; blue, 155 }  ,fill opacity=1 ] (211,102) -- (411,102) -- (411,152) -- (211,152) -- cycle ;
\draw   (361,152) -- (411,152) -- (411,352) -- (361,352) -- cycle ;
\draw   (211,152) -- (261,152) -- (261,302) -- (211,302) -- cycle ;
\draw   (211,302) -- (361,302) -- (361,352) -- (211,352) -- cycle ;
\draw   (261,152) -- (361,152) -- (361,202) -- (261,202) -- cycle ;
\draw (261,202) -- (361,202) -- (361,302) -- (261,302) -- cycle ;
\draw   (128.67,89.5) -- (518.67,89.5) -- (518.67,415.5) -- (128.67,415.5) -- cycle ;
\draw [line width=1.5]  [dash pattern={on 5.63pt off 4.5pt}]  (128.67,221.17) -- (518.67,221.17) ;
\draw    (128.67,78.5) -- (518.67,78.5) ;
\draw [shift={(518.67,78.5)}, rotate = 180] [color={rgb, 255:red, 0; green, 0; blue, 0 }  ][line width=0.75]    (10.93,-3.29) .. controls (6.95,-1.4) and (3.31,-0.3) .. (0,0) .. controls (3.31,0.3) and (6.95,1.4) .. (10.93,3.29)   ;
\draw [shift={(128.67,78.5)}, rotate = 0] [color={rgb, 255:red, 0; green, 0; blue, 0 }  ][line width=0.75]    (10.93,-3.29) .. controls (6.95,-1.4) and (3.31,-0.3) .. (0,0) .. controls (3.31,0.3) and (6.95,1.4) .. (10.93,3.29)   ;
\draw    (113.67,89.5) -- (113.67,220) ;
\draw [shift={(113.67,220)}, rotate = 270] [color={rgb, 255:red, 0; green, 0; blue, 0 }  ][line width=0.75]    (10.93,-3.29) .. controls (6.95,-1.4) and (3.31,-0.3) .. (0,0) .. controls (3.31,0.3) and (6.95,1.4) .. (10.93,3.29)   ;
\draw [shift={(113.67,89.5)}, rotate = 90] [color={rgb, 255:red, 0; green, 0; blue, 0 }  ][line width=0.75]    (10.93,-3.29) .. controls (6.95,-1.4) and (3.31,-0.3) .. (0,0) .. controls (3.31,0.3) and (6.95,1.4) .. (10.93,3.29)   ;
\draw    (113.67,223) -- (113.67,414) ;
\draw [shift={(113.67,415.5)}, rotate = 270] [color={rgb, 255:red, 0; green, 0; blue, 0 }  ][line width=0.75]    (10.93,-3.29) .. controls (6.95,-1.4) and (3.31,-0.3) .. (0,0) .. controls (3.31,0.3) and (6.95,1.4) .. (10.93,3.29)   ;
\draw [shift={(113.67,221.17)}, rotate = 90] [color={rgb, 255:red, 0; green, 0; blue, 0 }  ][line width=0.75]    (10.93,-3.29) .. controls (6.95,-1.4) and (3.31,-0.3) .. (0,0) .. controls (3.31,0.3) and (6.95,1.4) .. (10.93,3.29)   ;
\draw [line width=0.75]  [dash pattern={on 0.84pt off 2.51pt}]  (75.67,221.17) -- (128.67,221.17) ;
\draw [line width=0.75]  [dash pattern={on 0.84pt off 2.51pt}]  (75.67,102) -- (161,102) ;
\draw    (467.67,103.5) -- (467.67,220) ;
\draw [shift={(467.67,221.17)}, rotate = 270] [color={rgb, 255:red, 0; green, 0; blue, 0 }  ][line width=0.75]    (10.93,-3.29) .. controls (6.95,-1.4) and (3.31,-0.3) .. (0,0) .. controls (3.31,0.3) and (6.95,1.4) .. (10.93,3.29)   ;
\draw [shift={(467.67,101.5)}, rotate = 90] [color={rgb, 255:red, 0; green, 0; blue, 0 }  ][line width=0.75]    (10.93,-3.29) .. controls (6.95,-1.4) and (3.31,-0.3) .. (0,0) .. controls (3.31,0.3) and (6.95,1.4) .. (10.93,3.29)   ;
\draw  [fill={rgb, 255:red, 70; green, 70; blue, 70 }  ,fill opacity=1 ] (411,205.83) -- (461,205.83) -- (461,220.83) -- (411,220.83) -- cycle ;
\draw [line width=0.75]  [dash pattern={on 0.84pt off 2.51pt}]  (161,63.83) -- (161,102) ;
\draw [line width=0.75]  [dash pattern={on 0.84pt off 2.51pt}]  (211,40.83) -- (211,102) ;

\draw (308,252) node  [align=left] {\begin{minipage}[lt]{33pt}\setlength\topsep{0pt}
$\displaystyle L_{2( i+1)}$
\end{minipage}};
\draw (311,127) node   [align=left] {\begin{minipage}[lt]{19.95pt}\setlength\topsep{0pt}
$\displaystyle L_{2i}$
\end{minipage}};
\draw (316.17,60) node  [font=\large] [align=left] {\begin{minipage}[lt]{15pt}\setlength\topsep{0pt}
$\displaystyle X'$
\end{minipage}};
\draw (136,325) node   [align=left] {\begin{minipage}[lt]{68pt}\setlength\topsep{0pt}
$\displaystyle C'$
\end{minipage}};
\draw (141,158) node   [align=left] {\begin{minipage}[lt]{68pt}\setlength\topsep{0pt}
$\displaystyle B'$
\end{minipage}};
\draw (62,210) node [anchor=north west][inner sep=0.75pt]   [align=left] {$\displaystyle z$};
\draw (40,88) node [anchor=north west][inner sep=0.75pt]   [align=left] {$\displaystyle 2i\Delta $};
\draw (473,147) node [anchor=north west][inner sep=0.75pt]   [align=left] {$\displaystyle \tilde{z}$};
\draw (141,46) node [anchor=north west][inner sep=0.75pt]  [font=\small] [align=left] {$\displaystyle 2i\Delta $};
\draw (170,22.33) node [anchor=north west][inner sep=0.75pt]  [font=\small] [align=left] {$\displaystyle ( 2i+1) \Delta $};

\end{tikzpicture}
  \caption{A visualization of the proof of \cref{lem:lowerbound}.
        The outer rectangle represents the expansion of $X'$.
        The indices depicted to the left and above the outer rectangle represent absolute positions of the corresponding substrings within $S$.
        The dashed horizontal cut represents the derivation rule of $X'$. 
        Due to $X'$ being the last nonterminal containing $L_{2(i+1)}$ (represented as a light gray square), this cut must intersect $L_{2(i+1)}$, as depicted.
        The gray rectangles represent the $C_{N',M'}$ gadgets that form the frame of $L_{2i}$.
        Since the cut intersects $L_{2(i+1)}$, it intersects a vertical $C_{N',M'}$ gadget of $L_{2i}$ not-too-close to the boundaries ($L_{2i +1}$ functions as a 'buffer' layer between the $L_{2i}$ and $L_{2(i+1)}$, assuring that the cut is not within the first of the last $2\Delta$ rows of $L_{2i}$).
        The dark gray row is a row of some $C_{N',M'}$ gadget which is a substring of the bottom row of $B'$ (and the $\tilde{z}$-th row of the gadget).} \label{fig:lbproof}
\end{figure}
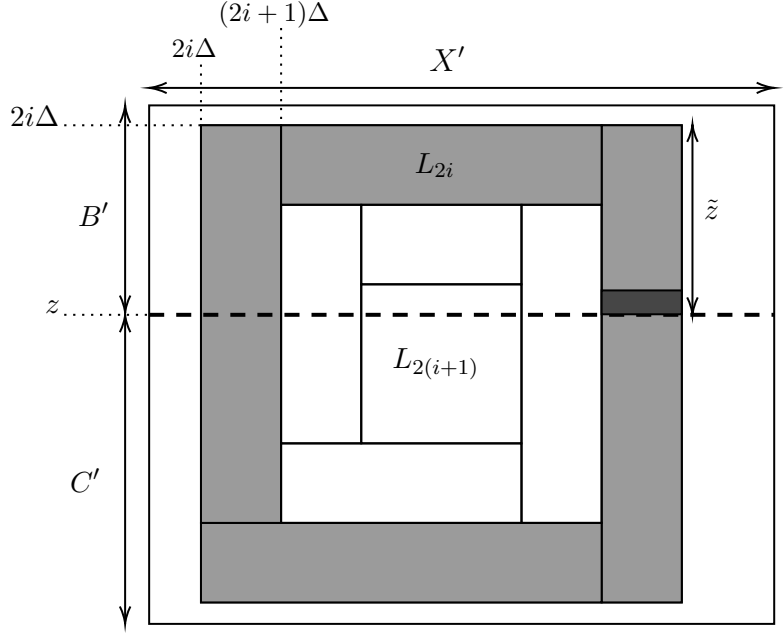

	Let the string $\exp_{\G'}(X')$ be $S_{X'} = S[x \dd x'][y \dd y']$.
    Since $S_{X'}$ contains $L_{2i}$, it holds that $(2 i\Delta \dd N-2i\Delta] \subseteq [x \dd x'] \cap [y \dd y']$.
    We claim that both $B'$ and $C'$ do not contain $L_{2i+2}$.
    If $X'=V_\ell$ and $V_{\ell+1} = B'$, we have that $B'$ does not contain $L_{2i+2}$ by the maximality of $V_{\ell}$ in the sequence $P$.
    Recall that the position $(N/2,N/2)$ is contained in $B'$ (and not in $C'$).
	This implies that $\exp_{\G'}(C')$ does not contain the unique square of $0$'s with dimensions $N-2\lambda\Delta \times N-2\lambda\Delta$ at the center of $S$, which is a necessary condition for containing any layer of $S$.
    Symmetric arguments can be made if $V_{\ell+1} = C'$.
    
    Intuitively, the fact that both $B'$ and $C'$ do not contain $L_{2i+2}$ means that the production rule $\rho'(X')= B' \ovrt C'$ 'cuts' the occurrence of $L_{2i+2}$ within $S_{X'}$.
	Formally, let $S_{B'}$ denote $\exp_{\G'}(B')$.
    It holds that $S_{B'}=S[x \dd z][y \dd y']$ for some $z\in ((2i+2)\Delta \dd N-(2i+2)\Delta]$.
    Consider the bottom row of $S_{B'}$ i.e. $S[z][y \dd y']$.
    Since the interval $[y \dd y']$ contains the horizontal span of $L_{2i}$, the bottom row $S[z][y \dd y']$ of $B'$ contains the $\tilde{z}$-th row of $L_{2i}$, with $\tilde{z} = z - 2i\Delta$.
    Indeed, we have $\tilde{z}\in (2\Delta \dd  N-(4i+2)\Delta]$, and in particular $\tilde{z}\in (2\Delta \dd  h-2\Delta]$ with $h=N-4i\Delta$ being the height of $L_{2i}$.
    It follows that $S[z][y \dd y']$ contains the $\tilde{z}$-th row of $C = C_{N- 4i\Delta,\Delta}$, which constitutes the rightmost $\Delta$ columns of $L_{2i}$.

    Let us focus on $R$, the $\tilde{z}$-th row of $C$ (See \cref{fig:lbproof2}).
    Recall that we took the largest $M'=2^{n}$ such that $M' (\log M' + 2) \le \Delta/2$.
    Due to the maximality of $M'$, we have $2M' (\log 2M' + 2) > \Delta/2$ which implies $M'\in \Omega(\frac{\Delta}{\log N}) = \Omega(\frac{N}{\log^2 N})$.
    Recall that $C$ is composed of two vertical blocks, each block consists of a vertical concatenation of $\ShiftBin_{M'}$ gadgets with dimensions $2M' \times M'(\log M' + 2)$.
    In the first block, every row except possibly the last $2M'-1$ rows is a row of some $\ShiftBin_{M'}$ gadget.
    In the second block, every row except the first $M'$ rows and the last $2M'-1$ rows is a row of some $\ShiftBin_{M'}$ gadget.
    We have $2M' \le M'\log M' \le \Delta/2 \le 2\Delta$ (The first inequality assumes $M'\ge 4$, which occurs for a sufficiently large $N$).
    It follows that $[2\Delta \dd  h- 2\Delta] \subseteq [2M' \dd h-2M']$ and therefore $\tilde{z} \in [2M' \dd h-2M']$.
    
    We have shown that $R$ contains a row of $\ShiftBin_{M'}$ in both blocks of $C$.
    Assume that $R$ contains the $r$-th row of a $\ShiftBin_{M'}$ in the left block of $C$.
    Due to the right block being vertically shifted by $M'$ compared to the left block, the row of $\ShiftBin_{M'}$ contained in $R$ from the second block of $C$ is $r' = r +M' \bmod 2M'$.
    One can verify that one of $r,r'$ must be in $[M'/2 \dd 3M'/2]$, assume without loss of generality that $r\in [M'/2 \dd 3M'/2]$.
    According to \cref{obs:2dframe1d} there is a 1D grammar deriving $R$ with $n'$ nonterminals such that  $n'$ is the number of nonterminals in $\G'$.
    Since $R$ is a superstring of the $r$-th row of $\ShiftBin_{M'}$, \cref{lem:shiftbincut} yields $n' \ge \min\{r,2M'-r\} \ge M'/2 \in \Omega(\frac{N}{\log^2 N})$, a contradiction to $n' \in o(\frac{N}{\log^2 N})$. 
    \end{proof}

    In conclusion, every 2D SLP deriving $S$ has either depth at least $c \log N$ or $\Omega(\frac{N}{\log^2 N}) = \Omega(|\G| \cdot \frac{N}{\log^3 N})$ nonterminals (recall that $|\G| \in \Oh(\log N)$), as required.
\end{proof}

\begin{figure}[ht] 
    \centering
     \tikzset{every picture/.style={line width=0.75pt}} 

\begin{tikzpicture}[x=0.75pt,y=0.75pt,yscale=-1,xscale=1]

\draw  [fill={rgb, 255:red, 155; green, 155; blue, 155 }  ,fill opacity=1 ] (161,102) -- (261,102) -- (261,352) -- (161,352) -- cycle ;
\draw [line width=0.75]  [dash pattern={on 0.84pt off 2.51pt}]  (75.67,102) -- (161,102) ;
\draw    (153.67,103.5) -- (153.67,220.83) ;
\draw [shift={(153.67,220.83)}, rotate = 270] [color={rgb, 255:red, 0; green, 0; blue, 0 }  ][line width=0.75]    (10.93,-3.29) .. controls (6.95,-1.4) and (3.31,-0.3) .. (0,0) .. controls (3.31,0.3) and (6.95,1.4) .. (10.93,3.29)   ;
\draw [shift={(153.67,101.5)}, rotate = 90] [color={rgb, 255:red, 0; green, 0; blue, 0 }  ][line width=0.75]    (10.93,-3.29) .. controls (6.95,-1.4) and (3.31,-0.3) .. (0,0) .. controls (3.31,0.3) and (6.95,1.4) .. (10.93,3.29)   ;
\draw [line width=0.75]  [dash pattern={on 0.84pt off 2.51pt}]  (161,63.83) -- (161,102) ;
\draw [line width=0.75]  [dash pattern={on 0.84pt off 2.51pt}]  (261,40.83) -- (261,102) ;
\draw [line width=0.75]  [dash pattern={on 0.84pt off 2.51pt}]  (75.67,352) -- (161,352) ;
\draw  [fill={rgb, 255:red, 155; green, 155; blue, 155 }  ,fill opacity=1 ] (161,102) -- (211,102) -- (211,352) -- (161,352) -- cycle ;
\draw  [fill={rgb, 255:red, 70; green, 70; blue, 70 }  ,fill opacity=1 ] (161,205.83) -- (261,205.83) -- (261,220.83) -- (161,220.83) -- cycle ;
\draw  [fill={rgb, 255:red, 255; green, 255; blue, 255 }  ,fill opacity=1 ] (161,322) -- (211,322) -- (211,352) -- (161,352) -- cycle ;
\draw  [fill={rgb, 255:red, 255; green, 255; blue, 255 }  ,fill opacity=1 ] (211,102) -- (261,102) -- (261,122) -- (211,122) -- cycle ;
\draw  [fill={rgb, 255:red, 255; green, 255; blue, 255 }  ,fill opacity=1 ] (211,342) -- (261,342) -- (261,352) -- (211,352) -- cycle ;
\draw    (266.67,105) -- (266.67,121) ;
\draw [shift={(266.67,122)}, rotate = 270] [fill={rgb, 255:red, 0; green, 0; blue, 0 }  ][line width=0.08]  [draw opacity=0] (8.93,-4.29) -- (0,0) -- (8.93,4.29) -- cycle    ;
\draw [shift={(266.67,102)}, rotate = 90] [fill={rgb, 255:red, 0; green, 0; blue, 0 }  ][line width=0.08]  [draw opacity=0] (8.93,-4.29) -- (0,0) -- (8.93,4.29) -- cycle    ;
\draw  [dash pattern={on 0.84pt off 2.51pt}]  (161,322) -- (286.67,322) ;
\draw  [dash pattern={on 0.84pt off 2.51pt}]  (161,352) -- (286.67,352) ;
\draw    (286.67,325) -- (286.67,349) ;
\draw [shift={(286.67,352)}, rotate = 270] [fill={rgb, 255:red, 0; green, 0; blue, 0 }  ][line width=0.08]  [draw opacity=0] (8.93,-4.29) -- (0,0) -- (8.93,4.29) -- cycle    ;
\draw [shift={(286.67,322)}, rotate = 90] [fill={rgb, 255:red, 0; green, 0; blue, 0 }  ][line width=0.08]  [draw opacity=0] (8.93,-4.29) -- (0,0) -- (8.93,4.29) -- cycle    ;

\draw (310,246.25) node  [color={rgb, 255:red, 255; green, 255; blue, 255 }  ,opacity=1 ] [align=left] {\begin{minipage}[lt]{68pt}\setlength\topsep{0pt}
Layer $\displaystyle 2( i+1$)
\end{minipage}};
\draw (43,93) node [anchor=north west][inner sep=0.75pt]   [align=left] {$\displaystyle 2i\Delta $};
\draw (142,147) node [anchor=north west][inner sep=0.75pt]   [align=left] {$\displaystyle \tilde{z}$};
\draw (4,343) node [anchor=north west][inner sep=0.75pt]   [align=left] {$\displaystyle N-( 2i) \Delta $};
\draw (283.33,111.75) node   [align=left] {\begin{minipage}[lt]{15.41pt}\setlength\topsep{0pt}
$\displaystyle M'$
\end{minipage}};
\draw (320.5,334.75) node   [align=left] {\begin{minipage}[lt]{39.69pt}\setlength\topsep{0pt}
$\displaystyle <\!2M'$
\end{minipage}};
\draw (149,39.67) node [anchor=north west][inner sep=0.75pt]  [font=\small] [align=left] {$\displaystyle 2i\Delta $};
\draw (228,24) node [anchor=north west][inner sep=0.75pt]  [font=\small] [align=left] {$\displaystyle ( 2i+1) \cdot \Delta $};

\end{tikzpicture}
        \caption{
            A demonstration of the $\tilde{z}$ row of $C=C_{\Delta,N-4\Delta}$ (The third padding block is ignored here).
            The white rectangle represents the 'padding' zeros added to each block of $C$, the grey area in each row is tiled with occurrences of $\ShiftBin_b$.
            Since $\tilde{z}$ is at least $2b$ indices away from the top/bottom of the gadget, the $\tilde{z}$-th row is contained in the gray areas of both rows.
            }
        \label{fig:lbproof2}
\end{figure}
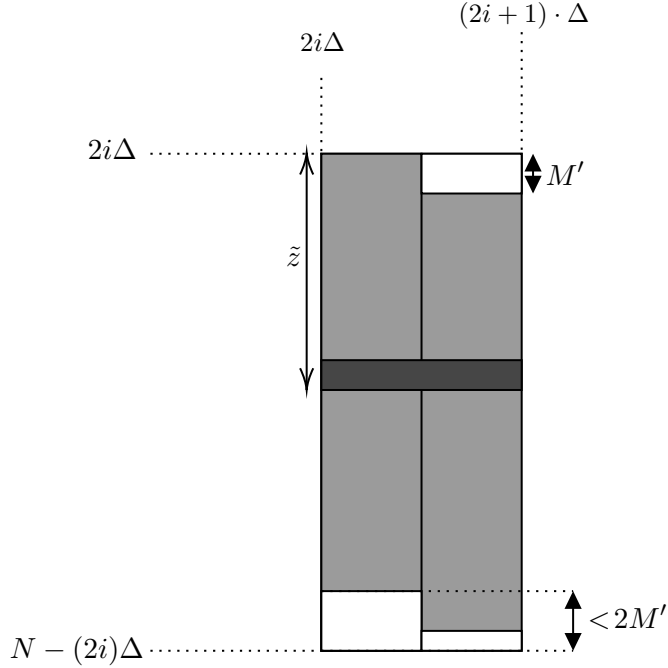

\section{Balancing Two-Dimensional SLPs by Allowing Holes}

As shown in \cref{lem:lowerbound}, it is not always possible to balance 2D SLPs while keeping their size small. To overcome this limitation, we next introduce \emph{two-dimensional tree straight-line programs} (2D TSLPs), or \emph{2D SLPs with holes}, which are a generalization of 2D SLPs that allow \emph{contexts} in the derived objects (For clarity, standard 2D strings are called \emph{ground} 2D strings). Intuitively, a context is a 2D string $S$ defined over a \emph{frame} (a rectangle from which a smaller $p \times q$ rectangle has been removed). It corresponds to a function associating a 2D string $T$ of size $p \times q$ to the string $S(T)$, obtained by filling the hole of $S$ with $T$. A 2D TSLP also allows extra productions, corresponding to context evaluation, composition, and concatenation of a context with a ground 2D string (see \cref{fig:operations}). We then apply a theorem from~\cite{DBLP:journals/jacm/GanardiJL21}, stating that from any SLP, one can construct an equivalent balanced TSLP having the same size. Using this balanced representation, we then describe a data structure providing efficient random access on a 2D string.

The result of~\cite{DBLP:journals/jacm/GanardiJL21} is based on the more general setting of \emph{(many-sorted) algebras}. Informally, a many-sorted algebra is a set of typed objects and functions that respect a particular \emph{syntax}, namely the rules under which objects and functions can be combined into valid expressions, depending on object types and function arities, but not on the values of such expressions. Because a SLP is defined as a set of nonterminals and productions formed by valid expressions, one relies only on the syntax to construct the derivation tree; consequently, properties such as size and depth are defined independently of the SLP's actual evaluation in a particular algebra. The authors define SLPs and TSLPs within this framework (in particular, contexts are defined as syntactically valid expressions where a term has been replaced by a ``hole'') and demonstrate that any SLP can be balanced into a TSLP that derives the same expression and having the same asymptotic size as the original SLP. The result is extremely general, in the sense that it does not depend on the actual objects considered: when evaluating the expression in an algebra, the TSLP is guaranteed to evaluate to the same value as the original SLP, yet the derivation tree is guaranteed to have logarithmic depth. The formal distinction between syntax and algebra is particularly relevant for the second main result of~\cite{DBLP:journals/jacm/GanardiJL21}, stating conditions under which, from the balanced TSLP, one can get back to an equivalent SLP with the same size and depth (hence, also balanced). Informally, this is possible when there exists a finite set of formulas such that every context is equivalent, in the algebra, to one of these formulas. 

In this work, we ignore this distinction and simply define many-sorted algebras as collections of sets and functions, the underlying syntax being implicit. This is enough to reformulate the first results from~\cite{DBLP:journals/jacm/GanardiJL21} about constructing balanced TSLPs from SLPs. For more details on this, we also refer the reader to~\cite{ganardiUniversal2019}, which contains the main ideas of the latter article in a simpler setting, and for a more general introduction on many-sorted algebra to~\cite{adamekAlgebraicTheoriesCategorical2010,wechlerUniversal1992}.

\para{Definitions.}

Let $\Scal$ be a finite set, $A=\bigsqcup_{s\in \Scal} A_s$ be a disjoint union of sets indexed by $\Scal$ and $\Gamma$ be a set of functions such that for every $f\in \Gamma$, there exist $p_1,\dots,p_{\rk},q\in \Scal$ such that $f:A_{p_1}\times \dots \times A_{p_{\rk}}\rightarrow A_q$. In this case, we write $\rank(f)=\rk$, $\type(f)=(p_1\dots p_{\rk},q)$ and $\sort(f)=q$. Motivated by~\cite{DBLP:journals/jacm/GanardiJL21}, we call the pair $(A,\Gamma)$ a \emph{many-sorted algebra}.

A \emph{straight-line program} over a many-sorted algebra $(A,\Gamma)$ is a tuple $\slp=(\nonterm,\start,\pro)$, such that:
\begin{itemize}
    \item $\nonterm=\bigsqcup_{s\in \Scal} \nonterm_s$ 
    is a finite set of \emph{nonterminals}, with $\nonterm\cap \Gamma=\emptyset$.
    \item $\start\in \nonterm$ is called the \emph{starting nonterminal}.
    \item $\pro$ is a mapping on nonterminals, where for each $X\in \nonterm$, one of the following holds:
    \begin{itemize}
        \item $\pro(X) =  a$ where $a\in A_q$ for some $q\in \Scal$,
        \item $\pro(X) = f(T_1,\dots,T_{\rk})$ where $X\in \nonterm_q$ for some $q\in \Scal$, $f\in \Gamma$ with $\type(f)=(p_1\dots p_{\rk},q)$, $T_i\in A_{p_i}\sqcup \nonterm_{p_i}$ for every $i\in [\rk]$.
   \end{itemize}
    Each pair $(X,\pro(X))$ for $X\in \nonterm$ is called a \emph{production}, with $X$ being its \emph{left-hand side} and $\pro(X)$ its \emph{right-hand side}. Additionally, the relation on $\nonterm$ defined by $X\le Y$ if $Y$ appears in the right-hand side of $X$ must be acyclic.
\end{itemize}

From the acyclicity condition, it follows that for each nonterminal $X \in \nonterm$, there is a unique element $\exp_{\slp}(X)$ derived from $X$ by recursively replacing each nonterminal in the right-hand side of its production by the corresponding derived element.
The element derived by the SLP is denoted by $\exp(\slp) = \exp_{\slp}(\start)$.
We define the \emph{size} $|\slp|$ as the total number of symbols on the right-hand sides of all productions, and the \emph{depth} of $\slp$ as the depth of its derivation tree.

Given a function $f\in \Gamma$ with $\type(f)=(p_1\dots p_{\rk},q)$, $i\in [\rk]$, and $a_1\in A_{p_1}, \dots, a_{i-1}\in A_{p_{i-1}}$, $a_{i+1}\in A_{p_{i+1}},\dots, a_{\rk}\in A_{p_{\rk}}$, we define the \emph{context} $f(a_1,\dots,a_{i-1},*,a_{i+1},\dots,a_{\rk}):A_{p_i}\rightarrow A_q$ as the mapping taking $b\in A_{p_i}$ to $f(a_1,\dots,a_{i-1},b,a_{i+1},\dots,a_{\rk})$.

A TSLP over a many-sorted algebra $(A,\Gamma)$ is a tuple $\slp=(\nonterm,\start,\pro)$ where:
\begin{itemize}
    \item $\nonterm=\nonterm_G\sqcup \nonterm_C$ is a finite set of \emph{nonterminals}, with $\nonterm\cap \Gamma=\emptyset$, partitioned into two disjoint sets:
    \begin{itemize}
        \item $\nonterm_G=\bigsqcup_{s\in \Scal} \nonterm_{G,s}$ is a set of \emph{ground nonterminals}.
        \item $\nonterm_C=\bigsqcup_{(p,q)\in \Scal^2} \nonterm_{C(p,q)}$ is a set of \emph{context nonterminals}, where each $X\in \nonterm_{C(p,q)}$ represents a context mapping an element of $A_p$ to an element of $A_q$. In that case, we can also write $X(*)$ instead of $X$.
    \end{itemize}
    \item $\start\in \nonterm_G$ is called the \emph{starting nonterminal}.
    \item $\pro$ is a mapping on nonterminals, where for each $X\in \nonterm$: \begin{itemize}
        \item If $X\in \nonterm_{G,q}$ for some $q\in \Scal$, one of the following holds:
        \begin{itemize}
            \item $\pro(X)=a$ where $a\in A_q$,
            \item $\pro(X)=f(T_1,\dots,T_{\rk})$ where $f\in \Gamma$ with $\type(f)=(p_1\dots p_{\rk},q)$, $T_i\in A_{p_i}\sqcup \nonterm_{G,p_i}$ for every $i\in [\rk]$.
        \item $\pro(X)=Y(T)$ where $Y\in \nonterm_{C(p,q)}$ for some $p\in \Scal$, and $T\in A_p\sqcup \nonterm_{G,p}$.
        \end{itemize}
        \item If $X\in \nonterm_{C(p,q)}$ for some $p,q\in \Scal$, one of the following holds:
        \begin{itemize}\raggedright
            \item $\pro(X(*))=f(a_1,\dots,a_{i-1},*,a_{i+1},\dots,a_{\rk})$ where $f\in \Gamma$ with $\type(f)=(p_1\dots p_{\rk},q)$, $p=p_i$ and $a_j\in A_{p_j}$ for every $j\in [\rk]\setminus\{i\}$. 
            \item $\pro(X(*))=f(T_1,\dots,T_{i-1},Y(*),T_{i+1},\dots,T_{\rk})$ where $f\in \Gamma$ with $\type(f)=(p_1\dots p_{\rk},q)$, $Y\in \nonterm_{C(p,p_i)}$, and $T_j\in A_{p_j}\sqcup \nonterm_{G,p_j}$ for every $j\in [\rk]\setminus\{i\}$.
            \item $\pro(X(*))=Y(Z(*))$ where $Y\in \nonterm_{C(s,q)}$ for some $s\in \Scal$, and $Z\in \nonterm_{C(p,s)}$.
    \end{itemize}
    \end{itemize}
		As for SLPs, the relation on $\nonterm$ defined by $X\le Y$ if $Y$ appears in the right-hand side of $X$ must be acyclic.
    \end{itemize}

We define $\exp(\slp)$, $|\slp|$ and $\depth(\slp)$ as for SLPs. Note that, since $\Scal\in\nonterm_G$, $\exp(\slp)$ is an element of $A$.

Our definition of SLPs and TSLPs requires the productions to be in a Chomsky normal form. In that sense, it differs from the one given in~\cite{DBLP:journals/jacm/GanardiJL21}, where productions can have a right-hand side of depth more than $1$, and where the depth is defined accordingly. However, it is easy to see that any general SLP or TSLP can be transformed into an equivalent one in a normal form, with equal depth and size increased only by a constant factor.

\begin{theorem}[{\cite[Theorem 3.19]{DBLP:journals/jacm/GanardiJL21}}]\label{thm:balance slp}
Let $(A,\Gamma)$ be a many-sorted algebra and let $\slp$ be a SLP over $(A,\Gamma)$ deriving $t\in A$. One can in time $\Oh(|\slp|)$ compute an equivalent TSLP over $(A,\Gamma)$ of size $\Oh(|\slp|)$ and depth $\Oh(\log |t|)$.
\end{theorem}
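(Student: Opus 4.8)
The plan is to prove this by a recursive \emph{heavy-path decomposition} of the derivation tree $\T_\slp$, producing two kinds of new nonterminals: ground nonterminals that compute expansions of subtrees, and context nonterminals that compute ``a subtree with one of its descendant subtrees excised''. First I would put $\slp$ into the normal form used here (right-hand sides of depth $1$), which costs only a constant factor in size and leaves $|t|$ unchanged; so every production is now either a generator $a\in A_q$ or a single application $f(T_1,\dots,T_{\rk})$. For a nonterminal $X$ write $n_X$ for the number of leaves of the subtree of $\T_\slp$ rooted at $X$ (so $n_\start=|t|$), and for a production $\pro(X)=f(T_1,\dots,T_{\rk})$ call an argument $T_i$ maximizing $n_{T_i}$ the \emph{heavy child} of $X$ (ties broken canonically). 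Then the heavy-child relation is a well-defined function on nonterminals and induces \emph{heavy paths} in the DAG, which is what lets the construction live on the DAG rather than on the exponentially larger tree.

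The core routine is a mutual recursion with the recurrence $D(n)\le D(n/2)+\Oh(\log n)$, hence $D(n)=\Oh(\log n)$. To build a balanced ground nonterminal for $\exp_\slp(X)$: follow heavy children from $X$ down to the first $Y$ with $n_Y\le n_X/2$, obtaining a spine $X=U_0,U_1,\dots,U_k=Y$. A short counting argument (using that $n_{U_j}=\sum_i n_{T_i}$ at a function node, and that $n_{U_j}>n_X/2$ for $j<k$) shows that every non-heavy sibling hanging off the spine, as well as $Y$ itself, has expansion size at most $n_X/2$. Recursively balance $Y$ and all those hanging siblings. Each spine step $U_j\to U_{j+1}$, once the recursively balanced siblings are plugged in, is a single-step context (the function at $U_j$ with a hole in the heavy slot); compose these $k$ single-step contexts into one context nonterminal via a \emph{balanced binary} composition tree of depth $\Oh(\log k)\le \Oh(\log n_X)$, and apply the resulting context to the balanced $Y$. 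Balancing a \emph{context} nonterminal (the subtree at $X$ with the subtree at a marked descendant $C$ removed) is analogous: follow heavy children, but stop either at the halving point or when the route to $C$ forces leaving the heavy path, split the spine into a balanced prefix context and a recursively balanced tail, and compose; the same recurrence applies with $n$ replaced by the number of leaves of the excised frame. Tracing the recursion from $\start$ gives depth $\Oh(\log|t|)$.

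The main obstacle — and the technical heart of~\cite{DBLP:journals/jacm/GanardiJL21} — is bounding the \emph{size} of the output by $\Oh(|\slp|)$ rather than by something proportional to $|t|$. Run on the literal derivation tree, the recursion creates one nonterminal per tree node, which is exponentially too much; the point is to implement it on the DAG (ground nonterminals indexed by the original nonterminal, context nonterminals indexed additionally by the excised descendant and the scale), and to prove that only $\Oh(|\slp|)$ distinct ground/context nonterminals are ever requested. This requires a careful amortized argument: the heavy-child function makes heavy paths canonical in the DAG, every halving step shrinks the scale parameter by a factor $\le 1/2$ so the recursion has depth $\Oh(\log|t|)$, and one charges each emitted nonterminal to an original production of $\slp$ through an accounting that controls how the heavy paths hanging off a fixed production overlap across recursion levels. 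I would treat this accounting as essentially the only place where real work is needed; the balanced composition of contexts, the depth recurrence, the final normal-form cleanup, and the $\Oh(|\slp|)$ running time (the whole construction is one DAG traversal doing $\Oh(1)$ work per emitted production) are all routine.

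One conceptual point worth isolating is that contexts are used here \emph{purely syntactically}: composing two contexts and evaluating a context on a ground element are operations available in \emph{every} many-sorted algebra (they are just substitution into a hole), so the construction needs no hypothesis whatsoever on $(A,\Gamma)$. This is precisely why the general output must be a TSLP rather than an SLP: without an algebra-specific ``finitely many context types'' condition there is no way to collapse the context nonterminals back into ground ones while preserving both the size and the depth bound. (When such a condition does hold — as it does for $1$D strings — a second theorem of~\cite{DBLP:journals/jacm/GanardiJL21} recovers a balanced SLP, but we will not need that here, since a balanced TSLP with holes is exactly what supports $\Oh(\log|t|)$-time random access.)
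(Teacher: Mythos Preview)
First, note that the paper does not itself prove this statement: it is quoted as \cite[Theorem~3.19]{DBLP:journals/jacm/GanardiJL21} and used as a black box, so there is no in-paper proof to compare against. That said, your sketch has a concrete error in the depth analysis that would need to be repaired before it could reproduce the cited bound.

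The recurrence $D(n)\le D(n/2)+\Oh(\log n)$ solves to $D(n)=\Theta(\log^{2} n)$, not $\Oh(\log n)$. The bad instance is a caterpillar derivation: each internal node has one leaf child and one heavy child continuing the path; then the spine from a node with $n$ leaves to the first $Y$ with $n_Y\le n/2$ has length $k\approx n/2$, so your uniformly balanced composition tree contributes $\Theta(\log n)$ depth, and this repeats at every one of the $\Theta(\log n)$ recursion levels, summing to $\sum_i \log(n/2^i)=\Theta(\log^{2} n)$. The repair in~\cite{DBLP:journals/jacm/GanardiJL21} is, in essence, to compose the single-step spine contexts in a \emph{weight-balanced} rather than uniformly balanced binary tree, weighting the $j$-th context by the expansion size $w_j$ of the light sibling it carries (and the last one by $n_Y$); then the $j$-th leaf of the composition tree sits at depth $\Oh(\log(n_X/w_j))$, and $\Oh(\log(n_X/w_j))+D(w_j)$ telescopes to $\Oh(\log n_X)$ by induction. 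Separately, your size argument is not yet a proof: because your recursion is parameterised by the starting scale $n_X$ and not just by the nonterminal reached, the same DAG node can head many distinct spines across different scales, and ``charge each emitted nonterminal to an original production'' does not by itself bound the total; this is exactly where the counting machinery of~\cite{DBLP:journals/jacm/GanardiJL21} does real work, and it interacts with the choice of decomposition above.
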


In fact, under the condition that functions from $\Gamma$ have bounded rank, one can generalize Theorem~\ref{thm:balance slp} in order to balance TSLPs. Formally, one can state the following:

\begin{theorem}[{\cite[Remark 3.27]{DBLP:journals/jacm/GanardiJL21}}]\label{thm:balance tslp}
Let $(A,\Gamma)$ be a many-sorted algebra and let $\slp$ be a TSLP over $(A,\Gamma)$ deriving $t\in A$. One can in time $\Oh(|\slp|)$ compute an equivalent TSLP over $(A,\Gamma)$ of size $\Oh(\rk\cdot|\slp|)$, and depth $\Oh(\log |t|)$, where $\rk$ is the maximal rank of a function in $\Gamma$.
\end{theorem}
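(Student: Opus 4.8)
The plan is to pass to a larger algebra in which contexts become ordinary objects, invoke \Cref{thm:balance slp} there, and translate the balanced TSLP back. From $(A,\Gamma)$ I define an \emph{enriched} many-sorted algebra $(\wh{A},\wh{\Gamma})$ whose sorts are those of $\Scal$ together with a fresh sort $(p,q)$ for every pair $p,q\in\Scal$; set $\wh{A}_s=A_s$ for $s\in\Scal$, and let $\wh{A}_{(p,q)}$ be the set of contexts $A_p\to A_q$ over $(A,\Gamma)$ (formally, expressions with a single hole, exactly as in the definition of a TSLP). The operations $\wh{\Gamma}$ are: every $f\in\Gamma$, still acting on the original sorts; an \emph{evaluation} $\mathrm{eval}_{p,q}\colon\wh{A}_{(p,q)}\times\wh{A}_p\to\wh{A}_q$ with $\mathrm{eval}_{p,q}(c,a)=c(a)$; a \emph{composition} $\mathrm{comp}_{p,s,q}\colon\wh{A}_{(s,q)}\times\wh{A}_{(p,s)}\to\wh{A}_{(p,q)}$; and, for every $f\in\Gamma$ with $\type(f)=(p_1\cdots p_{\rk},q)$ and every $i\in[\rk]$, a \emph{partial application} $\wh{f}_i$ that takes the $\rk-1$ arguments in the positions other than $i$ and returns the atomic context $f(a_1,\dots,a_{i-1},*,a_{i+1},\dots,a_{\rk})\in\wh{A}_{(p_i,q)}$. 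Every operation of $\wh{\Gamma}$ has rank at most $\max(\rk,2)=\Oh(\rk)$.

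Next, reinterpret the input TSLP $\slp$ over $(A,\Gamma)$ as an ordinary SLP $\slp'$ over $(\wh{A},\wh{\Gamma})$: keep every ground nonterminal, and turn every context nonterminal into a nonterminal of the corresponding sort $(p,q)$. Each production form of a TSLP has a direct counterpart over $(\wh{A},\wh{\Gamma})$: a ground production $Y(T)$ becomes $\mathrm{eval}(Y,T)$; a composition production $Y(Z(*))$ becomes $\mathrm{comp}(Y,Z)$; an atomic context $f(a_1,\dots,*,\dots,a_{\rk})$ becomes $\wh{f}_i(a_1,\dots,a_{\rk})$ with position $i$ dropped; and a production $f(T_1,\dots,Y(*),\dots,T_{\rk})$ becomes $\mathrm{comp}(W,Y)$ after introducing one fresh nonterminal $W$ with production $\wh{f}_i(T_1,\dots,T_{\rk})$ (position $i$ dropped). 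This adds only $\Oh(1)$ symbols per production, so $|\slp'|=\Oh(|\slp|)$ and $\exp(\slp')=\exp(\slp)=t$. Applying \Cref{thm:balance slp} to $\slp'$ yields an equivalent TSLP $\H$ over $(\wh{A},\wh{\Gamma})$ with $|\H|=\Oh(|\slp|)$ and depth $\Oh(\log|t|)$.

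It remains to translate $\H$ back into a TSLP over $(A,\Gamma)$. Ground nonterminals of $\H$ of an original sort become ground nonterminals over $(A,\Gamma)$, and ground nonterminals of $\H$ of a sort $(p,q)$ become context nonterminals over $(A,\Gamma)$; in both cases the production (which uses only $f\in\Gamma$, $\mathrm{eval}$, $\mathrm{comp}$, and the $\wh{f}_i$) is re-expressed with the production forms of a TSLP over $(A,\Gamma)$, where reassembling an application of some $f\in\Gamma$ around a hole out of the available atomic/binary shapes costs $\Oh(\rk)$ productions. The subtle part is the \emph{context} nonterminals of $\H$: their hole sort and/or result sort may be a context sort $(p,q)$, so they denote functions on contexts, which are not themselves contexts over $(A,\Gamma)$. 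The key claim --- proved by induction on context productions and using that a context uses its hole exactly once --- is that every context of $(\wh{A},\wh{\Gamma})$ equals, in the algebra, one of finitely many \emph{parameterised shapes}: an ordinary context of $(A,\Gamma)$; a two-sided composition $c\mapsto D_1\circ c\circ D_2$ by fixed contexts $D_1,D_2$ of $(A,\Gamma)$ (a context-sorted hole can only be composed, on the left and on the right, with fixed contexts); an evaluation $c\mapsto D(c(b))$ at what is then a fixed ground point $b$, for a fixed context $D$; or one of the $\Oh(\rk)$ ``ground $\mapsto$ context'' shapes built from a single $f\in\Gamma$ by designating one argument position as the hole, at most one other as a nested context, and fixing the rest. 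Using this, each context nonterminal of $\H$ is replaced --- guided by how it is plugged into the productions that mention it, and sharing the fixed parameters across all uses --- by a gadget of $\Oh(\rk)$ ground and context nonterminals over $(A,\Gamma)$ realising the matching shape. Carried out with care this keeps the total size at $\Oh(\rk\cdot|\slp|)$ and increases the depth by only a constant factor, producing an equivalent TSLP over $(A,\Gamma)$ of size $\Oh(\rk\cdot|\slp|)$ and depth $\Oh(\log|t|)$; this is precisely \cite[Remark 3.27]{DBLP:journals/jacm/GanardiJL21}, specialised to $(\wh{A},\wh{\Gamma})$.

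The main obstacle is exactly this last step: establishing that the contexts occurring over $(\wh{A},\wh{\Gamma})$ are confined to the finite list of shapes, and performing the substitution locally enough that the size grows by no more than a factor $\Oh(\rk)$ \emph{and} the depth grows by only a constant per level. The danger is that a ``ground $\mapsto$ context'' nonterminal of $\H$, inlined naively into each of its uses, would inflate the depth to $\Oh(\log^2|t|)$; the remedy is to build each fixed parameter once (it is an honest object of $(A,\Gamma)$, hence can be a shared standalone nonterminal) and duplicate only a shallow $\Oh(\rk)$-size wrapper per use --- the same bookkeeping as in~\cite{DBLP:journals/jacm/GanardiJL21}. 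Finally, in the 2D-with-holes setting of \cref{fig:operations} every operation of the enriched algebra has rank $2$, so $\rk=\Oh(1)$ and the construction costs nothing beyond constants.
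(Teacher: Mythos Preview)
The paper does not prove \Cref{thm:balance tslp}; it is quoted from \cite[Remark~3.27]{DBLP:journals/jacm/GanardiJL21} and used as a black box. Your reduction --- make contexts first-class in an enriched algebra $(\wh A,\wh\Gamma)$, apply \Cref{thm:balance slp} there, then push the resulting second-order contexts back down to first order --- is exactly the route that remark indicates, so there is no independent argument in the paper to compare against.

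Two places in your sketch would need tightening for a self-contained proof. First, your catalogue of shapes is not closed under composition as written: a context nonterminal of $\H$ with hole sort $(p,q)$ and result sort $(p',q')$ need not be of the form $c\mapsto D_1\circ c\circ D_2$, since the hole can pass through $\mathrm{eval}$ and then be re-lifted via some $\wh f_i$, yielding shapes such as $c\mapsto D_1\circ f(a_1,\dots,E(c(b)),\dots,*,\dots)\circ D_2$; likewise your ``ground $\mapsto$ context'' shapes should carry outer wrappers $D_1,D_2$. With this enlarged but still $\Oh(\rk)$-parameter catalogue the inductive closure does go through, and each parameter of a composite is definable by $\Oh(1)$ productions from those of the parts, which is what gives the $\Oh(\rk)$ overhead. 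Second, \Cref{thm:balance slp} applied to $\slp'$ yields depth $\Oh(\log|t'|)$ where $t'$ is the term derived over $(\wh A,\wh\Gamma)$, not $|t|$; you should note explicitly that every $\mathrm{comp}$ and $\mathrm{eval}$ node in the unfolded derivation of $\slp'$ can be charged to a function node of $t$, so that $|t'|=\Oh(\rk\cdot|t|)$ and the depth bound transfers.
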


\para{Balancing 2D SLPs with holes.}

We define 2D SLPs with holes as TSLPs over the many-sorted algebra $(A,\Gamma)$ where $A=\bigsqcup_{(n,m)\in \N^2} \Sigma^{n\times m}$ and $\Gamma$ contains the horizontal and vertical concatenation functions (formally, $\Gamma$ contains one horizontal concatenation for each height, and one vertical concatenation for each width, but we omit this distinction). In that setting, context nonterminals can be seen as 2D strings with a hole, that can be filled with another 2D string of appropriate dimensions.

\begin{example}
        Consider the following small 2D SLP with holes over the alphabet $\Sigma\coloneqq \{0,1\}$. Ground nonterminals are $A,B,C,T$ and there is one context nonterminal $H(*)$.
        The productions are:
        \begin{itemize}
        \item $\pro(A)\coloneqq 0$
        \item $\pro(B)\coloneqq 1$
        \item $\pro(C)\coloneqq A \ohrz B $
        \item $\pro(H(*))\coloneqq * \ovrt C $
        \item $\pro(T)\coloneqq H(C)$
        \end{itemize}

		We have $\exp(C)= 01$, $\exp(H(*))$ is a 2D string with hole on top of $01$, and finally \[\exp(T)= \begin{bmatrix} 0 & 1 \\ 0 & 1 \end{bmatrix}.\]
    \end{example}

From \cref{thm:balance tslp}, and noticing that $\Gamma$ only contains elements of rank $2$, we obtain the following (notice that standard 2D SLPs constitute a particular case of 2D SLPs with holes):

\begin{theorem}\label{thm:2d tslp}
	Given a 2D SLP $\G$ with holes deriving a string $T \in \Sigma^{N \times M}$, one can construct  in time $\Oh(|\slp|)$ an equivalent 2D SLP with holes of size $\Oh(|\G|)$ and depth $\Oh(\log (NM))$.
\end{theorem}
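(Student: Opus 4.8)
The plan is to apply \cref{thm:balance tslp} directly to the many-sorted algebra $(A,\Gamma)$ in which $A=\bigsqcup_{(n,m)\in\N^2}\Sigma^{n\times m}$ and $\Gamma$ consists of the horizontal and vertical concatenation functions. First I would observe that a 2D SLP with holes is, by our definition, precisely a TSLP over this particular algebra, so there is nothing to translate: the object $\G$ in the statement is already a TSLP over $(A,\Gamma)$, and it derives the ground element $T\in\Sigma^{N\times M}=A_{(N,M)}$. Hence \cref{thm:balance tslp} applies verbatim, and we only need to read off what its conclusion says in this concrete setting.

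The one point requiring a short argument is the bound on the rank $\rk$. Every function in $\Gamma$ is a binary concatenation (horizontal or vertical), so $\rank(f)=2$ for all $f\in\Gamma$; therefore $\rk=2=\Oh(1)$. Plugging this into \cref{thm:balance tslp}, the resulting equivalent TSLP has size $\Oh(\rk\cdot|\G|)=\Oh(|\G|)$ and depth $\Oh(\log|t|)$, computed in time $\Oh(|\G|)$. It remains to identify $|t|$: here $t=\exp(\G)=T$, and the natural size measure of an element of $A$ — the one implicit in the algebra, matching the number of leaves in the derivation tree — is the number of characters, i.e.\ $|T|=N\cdot M$. So the depth bound is $\Oh(\log(NM))$, exactly as claimed. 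I would also remark that a standard 2D SLP is the special case of a 2D SLP with holes using no context nonterminals, so the statement in particular applies when the input is an ordinary 2D SLP (which is the case of main interest for the random-access application).

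I do not expect any genuine obstacle: the proof is essentially an instantiation. The only thing to be slightly careful about is that our Chomsky-normal-form convention for (T)SLPs differs cosmetically from the one in~\cite{DBLP:journals/jacm/GanardiJL21}; but as already noted in the excerpt, the two conventions are interconvertible with only a constant-factor change in size and no change (up to constants) in depth, so \cref{thm:balance tslp} transfers to our setting without loss. One should also make sure that ``equivalent'' is interpreted correctly: the balanced TSLP derives the same element of $A$, i.e.\ literally the same 2D string $T$, which is what we want. I would therefore write the proof in a couple of sentences: note that $\G$ is a TSLP over $(A,\Gamma)$, that $\Gamma$ has maximal rank $2$, apply \cref{thm:balance tslp}, and observe $|T|=NM$.

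\begin{proof}
A 2D SLP with holes deriving $T$ is, by definition, a TSLP $\G$ over the many-sorted algebra $(A,\Gamma)$ with $A=\bigsqcup_{(n,m)\in\N^2}\Sigma^{n\times m}$ and $\Gamma$ the set of horizontal and vertical concatenation functions, with $\exp(\G)=T\in\Sigma^{N\times M}$. Every function in $\Gamma$ is a binary concatenation, so the maximal rank of a function in $\Gamma$ is $\rk=2$. Applying \cref{thm:balance tslp} to $\G$, we obtain in time $\Oh(|\G|)$ an equivalent TSLP over $(A,\Gamma)$ of size $\Oh(\rk\cdot|\G|)=\Oh(|\G|)$ and depth $\Oh(\log|T|)$. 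Since $T$ has $NM$ characters, $|T|=NM$, and the depth is $\Oh(\log(NM))$. As standard 2D SLPs are the special case of 2D SLPs with holes using no context nonterminals, the same conclusion holds when the input is an ordinary 2D SLP.
\end{proof}
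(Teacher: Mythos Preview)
Your proof is correct and matches the paper's own argument essentially verbatim: the paper derives the theorem in one line from \cref{thm:balance tslp} by noting that $\Gamma$ contains only rank-$2$ functions, and likewise observes that standard 2D SLPs are a special case.
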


With \Cref{thm:2d tslp}, we can show how to obtain a random access structure that improves the bounds of De and Kempa~\cite{de2025optimalrandomaccessconditional} by two logarithmic factors. 

\begin{theorem}
\label{thm:random}
Fix an arbitrary constant $\epsilon>0$.
Given a 2D SLP $\G$ deriving a 2D string $T \in \Sigma^{N \times M}$, one can construct
in time $\Oh(|\slp|\cdot \log^{\epsilon}(NM))$ a data structure of size $\Oh(|\slp|\cdot \log^{\epsilon}(NM))$
that allows accessing any $T[x][y]$ in $\Oh(\log(NM)/\log\log(NM))$ time.
\end{theorem}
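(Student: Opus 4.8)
The plan is to combine the balanced 2D SLP with holes from \Cref{thm:2d tslp} with a fusion-tree--based speedup, mimicking the standard trick for 1D SLPs of ``unwinding'' the top $\Theta(\log\log(NM))$ levels of the derivation of each nonterminal. First I would invoke \Cref{thm:2d tslp} to obtain, in time $\Oh(|\G|)$, an equivalent 2D SLP with holes $\G'$ of size $\Oh(|\G|)$ and depth $\Oh(\log(NM))$. For every ground and context nonterminal $X$ of $\G'$ I store its dimension $\dim(X)$ and, if $X$ is a context nonterminal, the position (top-left corner and dimensions) of its hole within $\exp(X)$. Given these, a single random-access query $T[x][y]$ can already be answered in $\Oh(\log(NM))$ time by walking down $\T_{\G'}$ from $\start$: at a ground production $f(T_1,T_2)$ (horizontal or vertical concatenation) we use $\dim(T_1)$ to decide which child contains position $(x,y)$ and recurse with shifted coordinates; at a production $X(T)$ with $X$ a context nonterminal we first descend into $X$'s context structure, and when the walk through the context reaches the hole we continue into $T$; a context production $f(\dots,Y(*),\dots)$ or $Y(Z(*))$ is handled analogously, always using stored dimensions and hole positions to route the query in $\Oh(1)$ time per step.

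To reduce the query time to $\Oh(\log(NM)/\log\log(NM))$ I would contract the derivation DAG by a factor of $\Theta(\log\log(NM))$ in depth. Concretely, for each nonterminal $X$ of $\G'$ I ``unwind'' the derivation of $X$ for exactly $\ell \coloneqq \Theta(\log\log(NM))$ levels, obtaining a sequence of $\Oh(2^{\ell}) = \Oh(\log^{\epsilon'}(NM))$ nonterminals/terminals of $\G'$ that lie at depth $\ell$ in $X$'s partial derivation tree (counting both ground and context nodes, and treating a substitution $X'(T')$ as branching into the context part and the ground argument). Since functions have rank $2$, after $\ell$ levels we obtain $\Oh(2^{\ell})$ pieces; choosing $\ell = c\log\log(NM)$ for a suitable constant $c$ makes $2^{\ell} = \log^{\epsilon}(NM)$ for the prescribed $\epsilon$. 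For each such $X$ I record these $\Oh(\log^{\epsilon}(NM))$ pieces together with the absolute bounding boxes they occupy inside $\exp(X)$ (accounting for holes, which are threaded through as ``transparent'' regions leading to the substituted argument), and I build a two-dimensional predecessor structure over the $\Oh(\log^{\epsilon}(NM))$ horizontal and vertical cut coordinates using fusion trees (\Cref{lem:fusion}): one fusion tree on the $x$-coordinates of the cuts and, for each resulting vertical strip, one fusion tree on the $y$-coordinates. A single predecessor query on $U = \Oh((NM)^{\Oh(1)})$ then costs $\Oh(\log(\log^{\epsilon}(NM))/\log\log(NM)) = \Oh(1)$ time, and locates in $\Oh(1)$ time which of the $\Oh(\log^{\epsilon}(NM))$ pieces contains $(x,y)$, after which we recurse on that piece with shifted coordinates. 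The total space is $\Oh(|\G'|\cdot\log^{\epsilon}(NM)) = \Oh(|\G|\cdot\log^{\epsilon}(NM))$, the construction time is the same, and the query descends $\Oh(\log(NM)/\log\log(NM))$ ``super-steps'', each costing $\Oh(1)$, for the claimed query time.

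The main obstacle I anticipate is correctly handling contexts (holes) during the unwinding and during the query walk: when a context nonterminal is substituted, the derivation of the overall string ``passes through'' the hole, so the $\ell$-level unwinding of a nonterminal must be defined over a partial derivation tree in which a substitution node $Y(Z)$ has two children --- the body of $Y$ (a context) and the ground argument $Z$ --- and the absolute bounding box bookkeeping must track, for each frontier piece, whether it sits inside the current rectangle or is reached only after traversing one or more holes (equivalently, a frontier piece's bounding box in $\exp(X)$ is obtained by composing the ambient rectangle with a chain of hole offsets). I would formalize this by an auxiliary lemma analogous to \Cref{lemma:folklore decomposition}: any rectangular sub-array of $\exp(X)$ that is ``reached at depth $\ell$'' is one of $\Oh(2^{\ell})$ explicitly computable pieces, each with an explicitly computable absolute offset. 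Once this bookkeeping is in place, the fusion-tree layer is routine, and correctness of the query follows by induction on the number of super-steps, using that within each super-step the $\Oh(2^{\ell})$ pieces partition the current rectangle (in the 2D, guillotine-cut sense induced by the grammar, so that a two-level fusion-tree predecessor search suffices to pinpoint the containing piece). A minor point to verify is that $\Oh(2^{\ell})$ pieces can always be arranged so that their bounding boxes form a hierarchically axis-aligned subdivision amenable to the strip-then-fusion-tree search; this holds because each concatenation is a single axis-aligned cut, and each substitution replaces a rectangle by a rectangle, so the frontier after $\ell$ levels is a slicing (guillotine) subdivision of the ambient rectangle, possibly with some cells being ``pointers through holes'' that are themselves rectangles.
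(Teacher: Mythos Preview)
Your proposal is correct and matches the paper's approach: balance via \Cref{thm:2d tslp}, unwind $K=\Theta(\log\log(NM))$ levels per nonterminal, and use fusion trees for $\Oh(1)$-time routing at each super-step. Two minor adjustments are worth noting: the paper takes $2^{K}=\log^{\epsilon/3}(NM)$ because the per-nonterminal lookup structure costs $\Theta(b^{2})$ space rather than $\Theta(b)$ (your choice $2^{\ell}=\log^{\epsilon}(NM)$ would give $\Oh(|\G|\log^{2\epsilon}(NM))$ space), and it handles the frontier pieces---which are rectangles \emph{and frames}, since context nonterminals can land on the frontier---by overlaying all $\Oh(b)$ horizontal and vertical boundary coordinates into a single $b\times b$ grid and storing for each cell a pointer to its unique containing piece, which is simpler than your strip-then-band scheme and dissolves the guillotine-subdivision concern in your final paragraph.
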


\begin{proof}
We start by applying \Cref{thm:2d tslp} to obtain an equivalent 2D SLP with holes of size
$\Oh(|\G|)$ and depth $\Oh(\log (NM))$, denoted $\G'$. For every nonterminal of $\G'$
we store the size of its derived string. Additionally, for every context nonterminal of $\G'$
we store the position of its hole. Then, it is straightforward to implement random access
by navigating down in the grammar from the starting nonterminal, in every step using
the stored information about the size of the derived strings (and possibly the position of the hole)
for every nonterminal participating in the currently considered production to decide
in constant time where to continue. 

Let us write $\G'=(\nonterm',\start',\pro')$. To improve the random access time, we proceed
similarly as in \cite[Theorem 2]{DBLP:conf/esa/BelazzouguiCPT15}, namely we define new productions $\tilde{\pro}$ as $\tilde{\pro}(X)=\pro'^K(X)$, expanding the grammar by $K=\lfloor\log\log^{\epsilon/3}(NM)\rfloor$ levels. Each right-hand side now has size at most $\log^{\epsilon/3}(NM)$, and the depth of the grammar has decreased to $\Oh(\log(NM)/\log\log(NM))$. 

In the production rule for a nonterminal $V$ (whether ground or context), each symbol corresponds to an axis-aligned domain, which is either a rectangle or a frame. Specifically, rectangular domains correspond to ground nonterminals (or to the hole of $V$, if $V$ is a context nonterminal), while frame domains correspond to context nonterminals. Together, these domains partition the $\h(V)\times \w(V)$ area into $b=\Oh(\log^{\epsilon/3}(NM))$ smaller regions. 
We build a $b \times b$ grid by creating vertical and
horizontal lines containing the sides of every rectangle (including the removed rectangle
for a frame). We store the $x$ or $y$ coordinates for each line in a fusion tree (Lemma~\ref{lem:fusion}).
This is illustrated in \cref{fig:random-access}.
For each of the $b^{2}$ cells, we store a pointer to the rectangle or the frame that fully
contains it, and the relative position of the cell inside that rectangle or frame. Then,
given a position $(x,y)$, we first use the fusion trees to locate in constant time the cell
that contains $(x,y)$. Then, we retrieve the rectangle or the frame that contains $(x,y)$
using the stored pointer. This gives us the ground or context nonterminal in which
we should continue the descent, together with the new position $(x',y')$.
The overall size of the structure is $\Oh(|\G'|\cdot b^{2})=\Oh(|\G|\cdot \log^{\epsilon}(NM))$.
\end{proof}

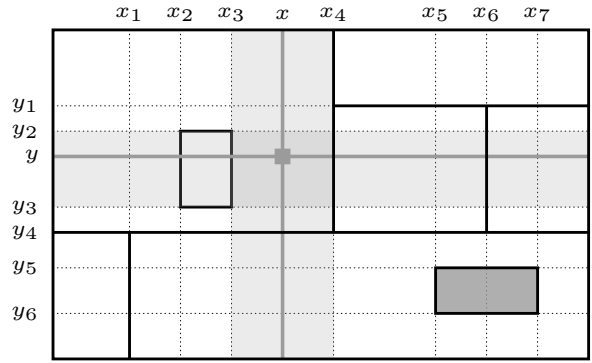
\begin{figure}[ht]
\begin{center}
\definecolor{mygrey}{RGB}{155,155,155}

\resizebox{0.5\textwidth}{!}{%
\begin{circuitikz}

	\tikzstyle{main lines} = [line width=0.8pt]
    \tikzstyle{grid lines} = [line width=0.2pt, densely dotted, line cap=round]    
	\tikzstyle{query lines} = [line width=1pt, color=mygrey]
    \tikzstyle{lbl}        = [font=\tiny, anchor=south]
    \tikzstyle{lbl-side}   = [font=\tiny, anchor=east]

	\foreach \pos/\i in {4.5/1, 5/2,5.5/3, 6.5/4, 7.5/5, 8.0/6, 8.5/7} {
        \coordinate (x\i) at (\pos, 0);  
        \draw[grid lines] (\pos, 11.5) -- (\pos, 8.25); 
        \node[font=\tiny] at (\pos, 11.65) {$x_\i$};
    }

	\foreach \pos/\i in {10.75/1, 10.5/2, 9.75/3, 9.5/4, 9.15/5, 8.7/6} {
        \coordinate (y\i) at (0, \pos);
        \draw[grid lines] (3.75, \pos) -- (9, \pos);
        \node[font=\tiny, anchor=east] at (3.75, \pos) {$y_\i$};
	}

    \coordinate (left) at (3.75, 0);
    \coordinate (right) at (9.0,  0);
    \coordinate (top)  at (0, 11.5);
    \coordinate (bot)  at (0, 8.25);

	\coordinate (p)  at (6, 0);
    \coordinate (q)  at (0, 10.25);
	\node[font=\tiny] at (p |- 0, 11.65) {$x$};
	\node[font=\tiny, anchor=east] at (3.75, 0 |- q) {$y$};    
	\node at (p |- q) [
    rectangle,      
    draw=mygrey, 
    fill=mygrey,
    minimum size=4pt,
    inner sep=0pt 
] {};

    \draw [main lines] 
        (x2 |- y2) rectangle (x3 |- y3);

	\draw [fill=mygrey, fill opacity=0.8, main lines] 
        (x5 |- y5) rectangle (x7 |- y6);

	\fill [color=mygrey, fill opacity=0.2] (x3 |- top) rectangle (x4 |- bot);
	\fill [color=mygrey, fill opacity=0.2] (right |- y2) rectangle (left |- y3);

\draw[query lines] (p |- top) -- (p |- bot); 
\draw [query lines] (left |- q) -- (right |- q);

\draw [main lines] (left |- top) rectangle (right |- bot);
\draw [main lines] (x4 |- y1) -- (right |- y1); 
\draw [main lines] (left |- y4) -- (right |- y4); 
\draw [main lines] (x1 |- y4)  -- (x1 |- bot);   
\draw [main lines] (x4 |- top)  -- (x4 |- y4);  
\draw [main lines] (x6 |- y1)  -- (x6 |- y4);

\end{circuitikz}
}
\caption{The $\h(V)\times \w(V)$ rectangle is partitioned into axis-aligned rectangles and frames (solid boundaries). The dark gray region indicates a hole.  These boundaries induce a non-uniform grid (dotted lines) with coordinates $x_i, y_i$ stored in a fusion tree. A query point $(x,y)$ (black square) falls into the grid cell defined by the intersection of the vertical and horizontal intervals $[x_3, x_4]$ and $[y_2, y_3]$ (light gray). These intervals are identified via predecessor queries on the fusion tree. This cell maps to a unique child nonterminal (here, the top-left frame) used for the next step of the descent. }
\end{center}    
\label{fig:random-access}
\end{figure}

\bibliographystyle{plain}
\bibliography{biblio}

\begin{thebibliography}{10}

\bibitem{adamekAlgebraicTheoriesCategorical2010}
J.~Ad{\'a}mek, J.~Rosick{\'y}, E.~M. Vitale, and F.~W. Lawvere.
\newblock {\em Algebraic {{Theories}}: {{A Categorical Introduction}} to
  {{General Algebra}}}.
\newblock Cambridge University Press, 1 edition, November 2010.

\bibitem{DBLP:journals/jal/AmirBF97}
Amihood Amir, Gary Benson, and Martin Farach.
\newblock Optimal two-dimensional compressed matching.
\newblock {\em J. Algorithms}, 24(2):354--379, 1997.

\bibitem{DBLP:conf/esa/BelazzouguiCPT15}
Djamal Belazzougui, Patrick~Hagge Cording, Simon~J. Puglisi, and Yasuo Tabei.
\newblock Access, rank, and select in grammar-compressed strings.
\newblock In {\em {ESA}}, volume 9294 of {\em Lecture Notes in Computer
  Science}, pages 142--154. Springer, 2015.

\bibitem{DBLP:journals/jcss/BermanKLPR02}
Piotr Berman, Marek Karpinski, Lawrence~L. Larmore, Wojciech Plandowski, and
  Wojciech Rytter.
\newblock On the complexity of pattern matching for highly compressed
  two-dimensional texts.
\newblock {\em J. Comput. Syst. Sci.}, 65(2):332--350, 2002.

\bibitem{DBLP:journals/algorithmica/BilleGGLW21}
Philip Bille, Pawel Gawrychowski, Inge~Li G{\o}rtz, Gad~M. Landau, and Oren
  Weimann.
\newblock Top tree compression of tries.
\newblock {\em Algorithmica}, 83(12):3602--3628, 2021.

\bibitem{DBLP:journals/siamcomp/BilleLRSSW15}
Philip Bille, Gad~M. Landau, Rajeev Raman, Kunihiko Sadakane, Srinivasa~Rao
  Satti, and Oren Weimann.
\newblock Random access to grammar-compressed strings and trees.
\newblock {\em {SIAM} J. Comput.}, 44(3):513--539, 2015.

\bibitem{DBLP:journals/cj/BrisaboaGGN24}
Nieves~R. Brisaboa, Travis Gagie, Adri{\'{a}}n G{\'{o}}mez{-}Brand{\'{o}}n, and
  Gonzalo Navarro.
\newblock Two-dimensional block trees.
\newblock {\em Comput. J.}, 67(1):391--406, 2024.

\bibitem{DBLP:journals/access/CarfagnaM24}
Lorenzo Carfagna and Giovanni Manzini.
\newblock The landscape of compressibility measures for two-dimensional data.
\newblock {\em {IEEE} Access}, 12:87268--87283, 2024.

\bibitem{carfagna2025generalizationrepetitivenessmeasurestwodimensional}
Lorenzo Carfagna, Giovanni Manzini, Giuseppe Romana, Marinella Sciortino, and
  Cristian Urbina.
\newblock Generalization of repetitiveness measures for two-dimensional
  strings, 2025.

\bibitem{DBLP:journals/tit/CharikarLLPPSS05}
Moses Charikar, Eric Lehman, Ding Liu, Rina Panigrahy, Manoj Prabhakaran, Amit
  Sahai, and Abhi Shelat.
\newblock The smallest grammar problem.
\newblock {\em {IEEE} Trans. Inf. Theory}, 51(7):2554--2576, 2005.

\bibitem{de2025optimalrandomaccessconditional}
Rajat De and Dominik Kempa.
\newblock Optimal random access and conditional lower bounds for {2D}
  compressed strings.
\newblock In {\em {SODA}}, 2026 (to appear).

\bibitem{DBLP:conf/stoc/FredmanW90}
Michael~L. Fredman and Dan~E. Willard.
\newblock {BLASTING} through the information theoretic barrier with {FUSION}
  {TREES}.
\newblock In {\em {STOC}}, pages 1--7. {ACM}, 1990.

\bibitem{DBLP:conf/soda/GanardiG22}
Moses Ganardi and Pawel Gawrychowski.
\newblock Pattern matching on grammar-compressed strings in linear time.
\newblock In {\em {SODA}}, pages 2833--2846. {SIAM}, 2022.

\bibitem{DBLP:journals/jacm/GanardiJL21}
Moses Ganardi, Artur Jez, and Markus Lohrey.
\newblock Balancing straight-line programs.
\newblock {\em J. {ACM}}, 68(4):27:1--27:40, 2021.

\bibitem{ganardiUniversal2019}
Moses Ganardi and Markus Lohrey.
\newblock A {{Universal Tree Balancing Theorem}}.
\newblock {\em {ACM} Trans. Comput. Theory}, 11:1:1--1:25, 2019.

\bibitem{DBLP:reference/hfl/GiammarresiR97}
Dora Giammarresi and Antonio Restivo.
\newblock Two-dimensional languages.
\newblock In {\em Handbook of Formal Languages {(3)}}, pages 215--267.
  Springer, 1997.

\bibitem{DBLP:journals/tit/LempelZ86}
Abraham Lempel and Jacob Ziv.
\newblock Compression of two-dimensional data.
\newblock {\em {IEEE} Trans. Inf. Theory}, 32(1):2--8, 1986.

\bibitem{DBLP:journals/gcc/Lohrey12}
Markus Lohrey.
\newblock Algorithmics on {{SLP-compressed}} strings: {{A}} survey.
\newblock {\em Groups. Complexity. Cryptology}, 4(2):241--299, 2012.

\bibitem{DBLP:conf/stacs/Matz97}
Oliver Matz.
\newblock Regular expressions and context-free grammars for picture languages.
\newblock In {\em {STACS}}, volume 1200 of {\em Lecture Notes in Computer
  Science}, pages 283--294. Springer, 1997.

\bibitem{DBLP:journals/tcs/Rytter03}
Wojciech Rytter.
\newblock Application of {{Lempel-Ziv}} factorization to the approximation of
  grammar-based compression.
\newblock {\em Theoretical Computer Science}, 302(1-3):211--222, 2003.

\bibitem{DBLP:journals/iandc/SiromoneySK73}
Gift Siromoney, Rani Siromoney, and Kamala Krithivasan.
\newblock Picture languages with array rewriting rules.
\newblock {\em Inf. Control.}, 22(5):447--470, 1973.

\bibitem{DBLP:conf/iwpt/Tomita89a}
Masaru Tomita.
\newblock Parsing 2-dimensional language.
\newblock In {\em {IWPT}}, pages 414--424. Carnegy Mellon University / {ACL},
  1989.

\bibitem{DBLP:conf/cpm/VerbinY13}
Elad Verbin and Wei Yu.
\newblock Data structure lower bounds on random access to grammar-compressed
  strings.
\newblock In {\em {CPM}}, volume 7922 of {\em Lecture Notes in Computer
  Science}, pages 247--258. Springer, 2013.

\bibitem{wechlerUniversal1992}
Wolfgang Wechler.
\newblock {\em Universal {{Algebra}} for {{Computer Scientists}}}.
\newblock Springer Berlin Heidelberg, Berlin, Heidelberg, 1992.

\end{thebibliography}

\end{document}